\newcommand{\E}{{\bf{E}}}
\newcommand{\PP}{{\bf{P}}}
\newcommand{\Bin}{{\bf{Bin}}}
\newtheorem{tm}{Theorem}
\newtheorem{lem}{Lemma}
\begin{document}

\bibliographystyle{plain}
\parindent=0pt
\centerline{\LARGE \bfseries Clustering function: a measure of social influence}

\par\vskip 3.5em

\centerline{
Mindaugas  Bloznelis\footnote{Faculty of Mathematics and Informatics, Vilnius University, 
03225 Vilnius, Lithuania}\footnote{Corresponding author mindaugas.bloznelis@mif.vu.lt}
and 
Valentas Kurauskas$^1$
}


\bigskip





\begin{abstract}
A commonly used characteristic of  statistical 
dependence of adjacency relations  in  real networks, the 
  clustering coefficient, evaluates  chances that two neighbours of a 
given vertex are adjacent. An extension is obtained by 
 considering conditional probabilities that two randomly chosen vertices 
are adjacent given that they have $r$ common neighbours. We denote
 such probabilities  $cl(r)$ and call  $r\to cl(r)$ the clustering function. 
 We compare clustering functions of several networks having non-negligible clustering 
coefficient. They show similar patterns and surprising regularity. 
We establish a first order asymptotic (as the number of vertices $n\to+\infty$) of 
the clustering function 
of  related random intersection graph models admitting nonvanishing clustering 
coefficient and
asymptotic degree distribution having a
finite second moment. 
\par
\end{abstract}

\smallskip
{\bfseries key words:}  clustering coefficient, social network, intersection graph, power law
\par\vskip 2.5em

2000 Mathematics Subject Classifications: 91D30,   05C80,  05C07, 91C20

\par\vskip 2.5em

\section{Introduction}
Our study is motivated by the following
question: given two vertices of a network, the presence of 
how many common neighbours would imply with certainty  
that these two vertices are adjacent. 
A "softer" question is about the probability that two vertices 
with (at least) $r$ common neighbours establish a link. The answer 
is given by the clustering functions (\ref{Cl1}) and (\ref{Cl2}).

Let ${\cal G}=({\cal V}, {\cal E})$ be a finite graph on vertex set
 ${\cal V}$ and with edge set ${\cal E}$. 
The number of neighbours of a vertex $v$ is denoted $d(v)$.
The number of common neighbours of vertices $v_i$ and $v_j$ is denoted $d(v_i,v_j)$.
We are interested 
in the fraction of adjacent pairs $v_i\sim v_j$ among all pairs 
$\{v_i,v_j\}\subset {\cal V}$ having (at least) $r$ common neighbours.
Here and below '$\sim$' denotes the adjacency relation of ${\cal G}$.
More formally, let us consider the random pair of distinct vertices $\{v_1^*, v_2^*\}$ 
drawn from ${\cal V}$
uniformly at random. Define the clustering functions of ${\cal G}$ 
\begin{eqnarray}\label{Cl1}
&&
r\to cl_{\cal G}(r):=\PP(v_1^*\sim v_2^*|\,d(v_1^*,v_2^*)=r),
\\
\label{Cl2}
&&
r\to Cl_{\cal G}(r):=\PP(v_1^*\sim v_2^*|\,d(v_1^*,v_2^*)\ge r).
\end{eqnarray}
In the case of a social network (\ref{Cl1}), (\ref{Cl2})  could be interpreted as measures of social influence or 
pressure exercised by the
neighbours on a  pair of actors to   establish a communication link. 
We remark that characteristics (\ref{Cl1}) and (\ref{Cl2}) are related  to the clustering coefficient of ${\cal G}$. We recall its definition for convenience.
 Let $(v_1^*, v_2^*, v_3^*)$ be an ordered  triple of distinct vertices drawn from ${\cal V}$
uniformly at random. The conditional probability that $v_1^*$ is adjacent to $v_2^*$, given that
$v_1^*$ and $v_2^*$ are both adjacent to $v_3^*$, is called the (global) clustering coefficient (\cite{Barrat2000}, \cite{Newman2001}, \cite{Newman2003}, \cite{storgatz1998}).
We denote it $C=C_{\cal G}=\PP(v_1^*\sim v_2^*|v_1^*\sim v_3^*, v_2^*\sim v_3^*)$.

In this paper we study clustering functions  first by  considering  empirical data and then 
by a rigorous analysis of 
related random graph models.


We consider clustering function (\ref{Cl1}) of  real networks admitting positive 
clustering
coefficient: the actor network, where two actors are declared adjacent whenever 
they have acted
in the same film (\cite{actornetwork}), and the Facebook network 
(\cite{Bakshy}, \cite{Foudalis2011}, \cite{Traud}).
 We remark that empirical plots  show similar pattern and surprising regularity.

 Our choice of the random  graph model
is motivated by an observation of Newman et al. \cite{Newman+W+S2002} that 
the clustering property of some social networks (so called affiliation networks) could be explained by the presence of a bipartite graph structure. For example, the bipartite graph, where actors are linked to films, defines the actor network.  
It seems reasonable that 
a bipartite graph structure might also be helpful in   
explaining (at least to some extent)
the adjacency relations of  Facebook network: two 
members become adjacent because they share some common interests/attributes. 

We secondly consider clustering function (\ref{Cl1}) of a  
random intersection graph, where vertices (actors) are prescribed attribute 
sets independently at random and two vertices are declared  
adjacent whenever they share at least one common attribute
(\cite{karonski1999}, \cite {godehardt2003},  see also 
\cite{Barbour2011}, \cite{Guillaume+L2004}). Random intersection graphs  
are relatively simple objects and for them rigorous mathematical results can be obtained. 
We evaluate
the probabilities $\PP(v_1^*\sim v_2^*|\,d(v_1^*,v_2^*)=r)$, $r=0,1,2,\dots$ 
for a random intersection graph in Sect. 3 below.

\section{Clustering functions: empirical results}

In Figure 1 we plot clustering functions (\ref {Cl1}) and (\ref{Cl2}) of 
three drama actor networks: 
the English actor network with $n=402622$
actors, $m=66127$ films  and the clustering coefficient $C=0.32$ 
(the clustering coefficients here and below are rounded up to 2 decimal places), 
the French actor network 
with $n=43204$
actors, $m=5629$ films and the clustering coefficient $C=0.30$ 
and  
the Russian actor network with $n = 9880$ actors, $m=2459$ films and $C=0.44$. 
The data has been obtained from \cite{actornetwork}. 
\begin{figure}
    \begin{center}
        \includegraphics[scale=0.5]{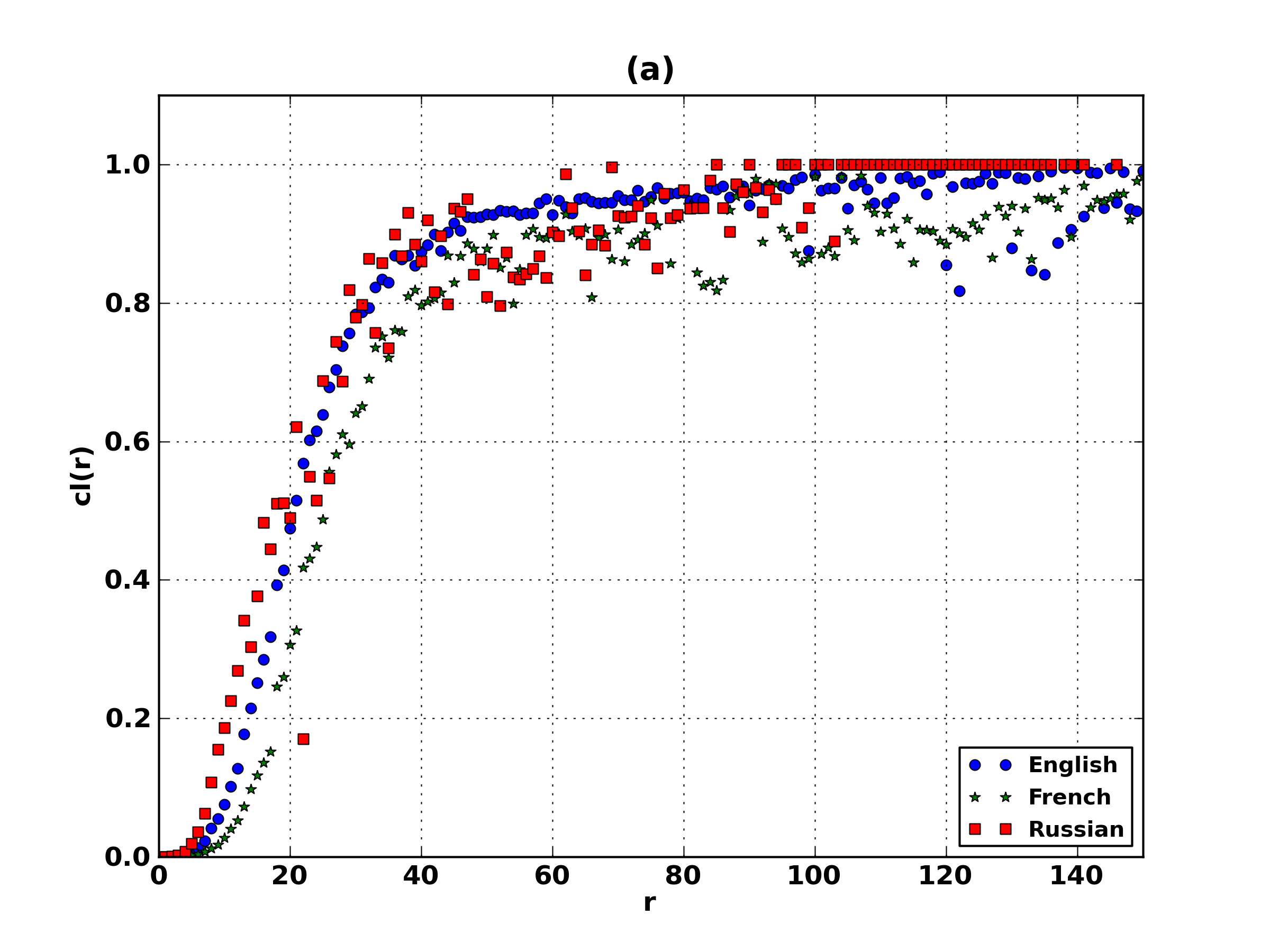}
        \includegraphics[scale=0.5]{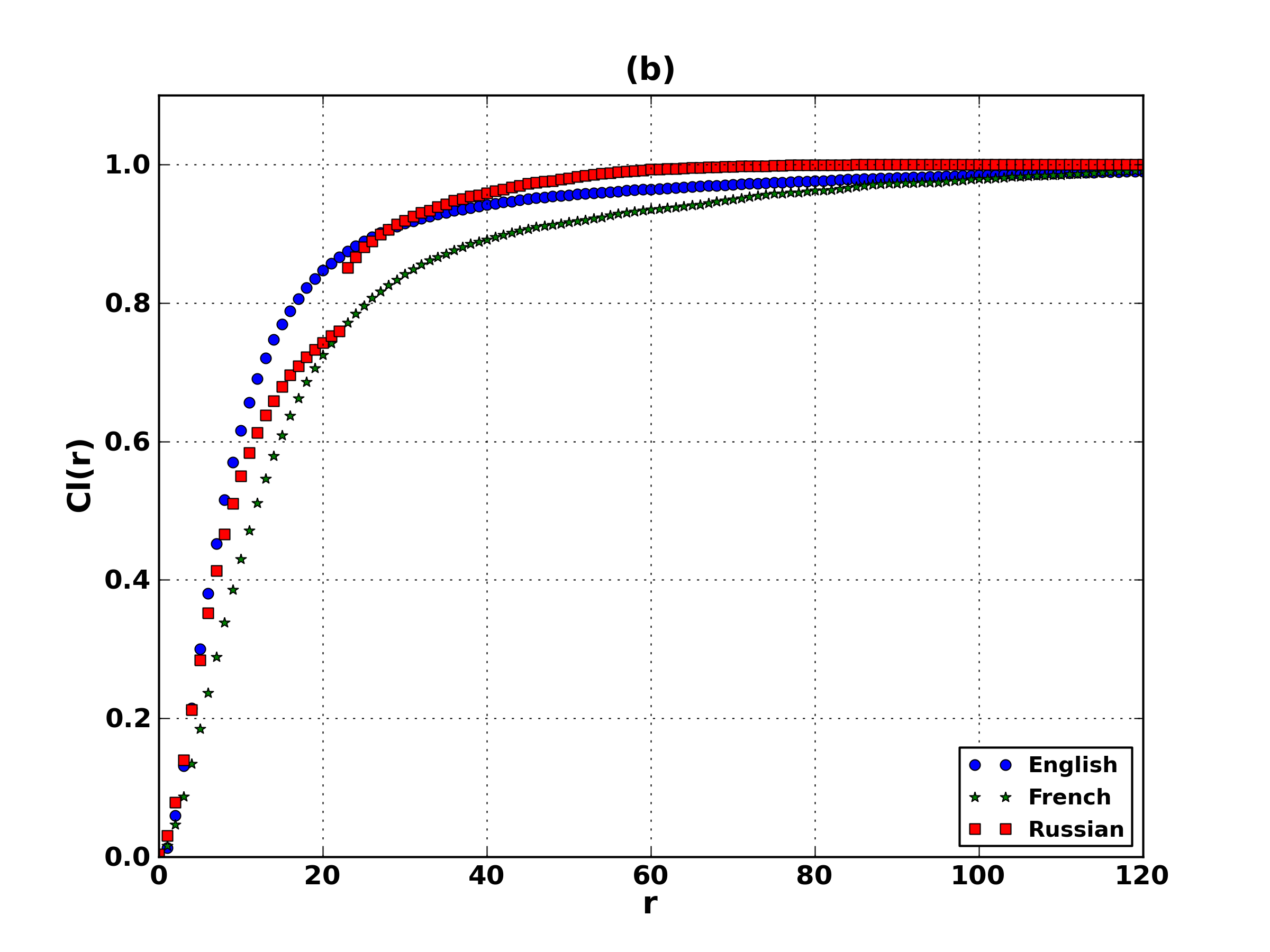}
    \end{center}
    \caption{\small     Clustering functions for three actor networks: (a) $cl(r)$, (b) $Cl(r)$.}
 \label{fig.1}
\end{figure}
In Figure 2 we plot clustering function (\ref{Cl1})  of three networks 
describing relations between community members at three  
different universities (data from \cite{Traud}): 
the first network has $n=17425$ vertices and the clustering 
coefficient $C=0.16$ ($\bullet$ blue graph);
the second network has $n=9414$ vertices and 
the clustering coefficient $C=0.15$ ($\star$ green graph);
the third network has $n=6596$ vertices and the clustering 
coefficient $C=0.16$ ({\tiny{$\blacksquare$}} red graph).
\begin{figure}
    \begin{center}
        \includegraphics[scale=0.5]{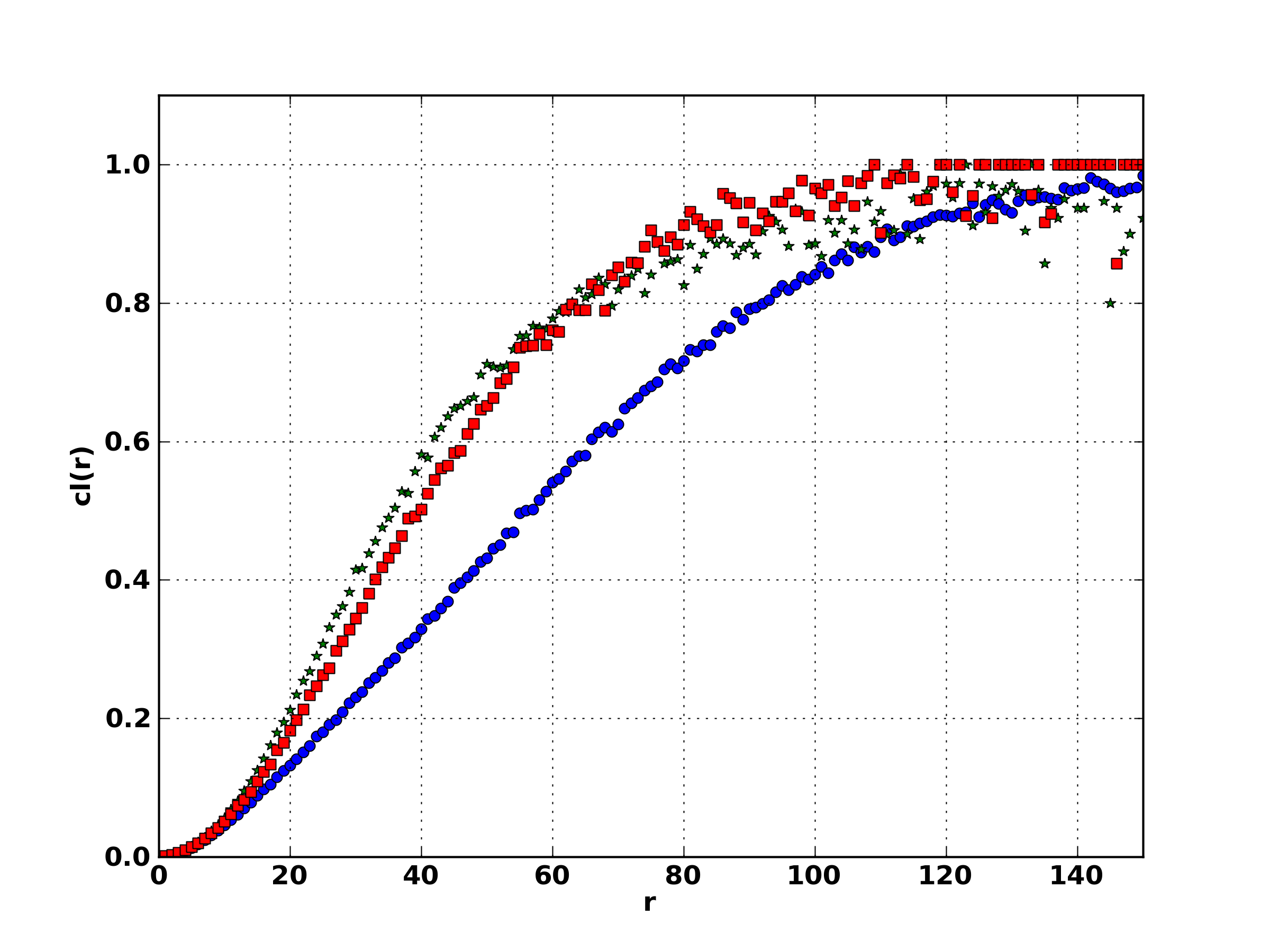}
    \end{center}
    \caption{\small     Clustering functions of three  university networks.}
 \label{fig.2}
\end{figure}

\section{Clustering functions of random intersection graphs}

Vertices $v_1,\dots, v_n$ of an  intersection graph are represented 
by subsets $D_1,\dots, D_n$ of a given ground set 
$W=\{w_1,\dots, w_m\}$. Elements of $W$ are called attributes or keys. 
Vertices $v_i$ and $v_j$ are declared adjacent if 
 $D_i\cap D_j\not=\emptyset$. 
 The adjacency relations of such an intersection graph resemble those of some
real networks,  e.g., the collaboration network, where  authors are declared 
adjacent whenever they have co-authored a paper, or  the 
actor network, where two actors are linked by an edge whenever they have 
acted in the same film. 
 Random intersection graphs 
 have 
 attracted considerable attention in the recent literature, see, e.g., 
\cite{behrisch2007},
\cite{Blackburn2009},
\cite{Bloznelis2011+}
\cite{Britton2008},
\cite{eschenauer2002},
\cite{Rybarczyk2011},
\cite{Spirakis2011},
\cite{Yagan2009}.
They admit a power law degree distribution and tunable clustering. 
We consider two models of random intesection graphs: the active graph and the inhomogeneous graph.

{\bf Active graph.} In the {\it active} random intersection graph  $G_1(n,m,P)$
every vertex $v_i\in V=\{v_1,\dots, v_n\}$ selects its attribute set $D_i$ 
independently at random (\cite{godehardt2003}, \cite{karonski1999}).
We assume for simplicity that  independent random sets
$D_1,\dots, D_n$ have the same probability distribution  
\begin{equation}\label{DP}
 \PP(D_i=A)={\tbinom{m}{|A|}}^{-1}P(|A|),
\qquad {\text{ for \ \  any}}
\quad
A\subset W.
\end{equation}
 In particular, all attributes have equal 
probabilities to be selected.
Here $P$ is the common probability distribution of  
the sizes of selected sets $X_i:=|D_i|$ (for each $i=1,\dots, n$ we have
$\PP(X_i=k)=P(k)$, $k=0,1,\dots m$).  
We remark that $X_1,\dots, X_n$ are independent random variables taking values in $\{0,1,\dots, m\}$.

We study the clustering function 
\begin{eqnarray}\label{Cl1+}
r\to cl(r)=\, \PP(v_1^*\sim v_2^*|\,d(v_1^*,v_2^*)=r)\,=\PP(v_1\sim v_2|\,d(v_1,v_2)=r)
\end{eqnarray}
of a sparse random intersection graph with  large number of vertices. We remark, that 
the second identity of (\ref{Cl1+}) follows from the fact that the probability  
distribution of $G_1(n,m,P)$ is invariant under permutation of its vertices.
By sparse we mean
that the number of edges scales as the number of vertices $n$ as  $n\to+\infty$.
It is convenient to consider a sequence of random intersection graphs $\{G_{(n)}\}_n$, where
 $G_{(n)}=G_1(n,m, P)$ and where  $m=m_n$ and $P=P_n$ both depend on $n$. 
We remark that $\{G_{(n)}\}_n$ is  a sequence of 
sparse random graphs whenever
the size $X_1$ of the typical random set is of order $(m/n)^{1/2}$  as $m,n\to\infty$ 
(\cite{Bloznelis2008}).  
Furthermore, assuming
 that

(i)   $X_1\sqrt{n/m}$ converges in distribution to some random variable $Z$;

(ii)  $\E Z<\infty$ and $\E X_1\sqrt{n/m}$ converges to $\E Z$

one obtains the  asymptotic degree distribution  of $\{G_{(n)}\}$ 
\begin{equation}\label{adegree}
 \lim_{n\to+\infty}\PP(d(v_1)=k)=\E e^{-Z\,\E Z}(Z\,\E Z)^k/k!,
\qquad 
{\text{for}}
\qquad
k=0,1,\dots,
\end{equation}
see \cite{Bloznelis2008}, \cite{Bloznelis2011+}, \cite{Deijfen}, \cite{stark2004}.  
Here $d(v)$ denotes the degree of a vertex $v$. We remark that a heavy tailed 
distribution of 
$Z$ yields a heavy tailed asymptotic degree distribution (\ref{adegree}).
Along with the first moment condition (ii) we shall  also consider the $r-$th  moment condition

(ii-r) $\E Z^r<\infty$ and $\E (X_1\sqrt{n/m})^r$ converges to $\E Z^r$.

 We denote 
$z_r=\E Z^r$ and  
 $\delta_r=\E d_*^r$  where $d_*$ is a random variable with
the asymptotic degree distribution
$\PP(d_*=k)=\E e^{-z_1Z}(z_1Z)^k/k!$, $k=0,1,\dots$.
We assume below that $\E Z>0$, i.e., that the asymptotic degree 
distribution is non-degenerate. Furthermore, we assume for convenience that the ratio 
$\beta_n=m/n$ tends 
to some $\beta\in (0,+\infty]$
as $n\to+\infty$.

An important property of the active random intersection graph is that the 
adjacency relations  are statistically dependent events. 
In particular, the clustering coefficient 
\begin{displaymath}
\alpha=\alpha(G_{(n)})=\PP(v_1\sim v_2|v_1\sim v_3, v_2\sim v_3) 
\end{displaymath}
of a sparse random intersection graph $G_{(n)}$
is bounded away from zero
as $n \to+\infty$ provided that the second moment of the degree distribution is finite and  
$\beta<\infty$ (\cite{Bloznelis2011+}, \cite{Deijfen}). In this case we have 
(see (\cite{Bloznelis2011+}, \cite{Deijfen}) 
\begin{eqnarray}\label{cc}
\alpha
=
\beta^{-1/2}\delta_1^{3/2}(\delta_2-\delta_1)^{-1}+o(1)
\qquad{\text{as}}
\qquad
n\to+\infty.
\end{eqnarray}
We remark that for $\beta=+\infty$ we have  $\alpha=o(1)$. For comparison,  the  
(unconditional) edge probability 
$p_e:=\PP(v_1\sim v_2)$ satisfies for any 
$\beta\in (0,+\infty]$, see, e.g., \cite{Bloznelis2011+}, 
\begin{displaymath}
 p_e=\delta_1n^{-1}+o(n^{-1})=O(n^{-1}).
\end{displaymath}

 
 Theorems \ref{T1} and \ref{T2} show a first order asymptotics of the 
conditional probabilities
$cl(r)$  
as $n\to+\infty$
 in the cases where $\beta<\infty$ and  $\beta=\infty$, respectively.



\begin{tm}\label{T1} Let $m,n\to\infty$. 
Assume that (i), (ii-2)  hold and $\E Z>0$. Suppose that  $\beta_n\to \beta \in(0,+\infty)$.
Denote $\Lambda=\sqrt{\delta_1/\beta}$. We have
\begin{equation}
cl(r)=
\begin{cases}
p_ee^{-\Lambda}(1+o(1)),
\qquad \ \ \
\quad
r=0;
\\
\label{b2-4}
\frac{\alpha}{\alpha+(1-\alpha)e^{\Lambda}}(1+o(1)),
\quad \quad
\
r=1;
\\
1-o(1),
\qquad
\qquad
\quad
\quad
\quad
\
r\ge 2.
\end{cases}
\end{equation}
\end{tm}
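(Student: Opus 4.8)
The plan is to compute the joint asymptotics of the pair of events $\{v_1\sim v_2\}$ and $\{d(v_1,v_2)=r\}$ by conditioning on the attribute sets and exploiting the near-independence of vertices in the sparse regime. Write $N = N(v_1,v_2)$ for the set of common neighbours of $v_1$ and $v_2$ among $v_3,\dots,v_n$, and decompose a common neighbour $v_k$ into two types: those that are joined to both $v_1$ and $v_2$ through a \emph{single} shared attribute lying in $D_1\cap D_2$ (only possible when $v_1\sim v_2$), and those joined through two distinct attributes, one in $D_1$ and one in $D_2$. The heuristic, which the proof must make rigorous, is: given $|D_1|\approx x_1\sqrt{m/n}$, $|D_2|\approx x_2\sqrt{m/n}$, the probability that $v_1\sim v_2$ is $\approx |D_1\cap D_2|$-driven and of order $1/n$, while the number of ``two-attribute'' common neighbours is asymptotically $\mathrm{Poisson}$ with a bounded mean, and the number of ``one-attribute'' common neighbours (through a fixed key in $D_1\cap D_2$, of which there is typically exactly one when an edge is present) is asymptotically $\mathrm{Poisson}$ with mean of order $\sqrt{n/m}\cdot(\text{something})$, i.e.\ mean $\to \Lambda = \sqrt{\delta_1/\beta}$ after the dust settles. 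This is exactly the mechanism behind the clustering coefficient formula (\ref{cc}), and the constant $\Lambda$ is the expected number of extra common neighbours contributed by the one shared attribute that witnesses the edge.

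Concretely I would proceed as follows. First, set up the computation of $\PP(v_1\sim v_2,\, d(v_1,v_2)=r)$ by conditioning on $D_1,D_2$ and using that $D_3,\dots,D_n$ are i.i.d.; this makes $d(v_1,v_2)$, conditionally, a sum of $n-2$ i.i.d.\ indicators, which concentrates and is asymptotically Poisson. Second, split according to whether $D_1\cap D_2=\emptyset$ or not. On $\{D_1\cap D_2=\emptyset\}$ one has $v_1\not\sim v_2$, and $d(v_1,v_2)$ counts only two-attribute common neighbours; its conditional mean converges (using (i), (ii-2), and $\beta_n\to\beta$) to a finite limit, so $\PP(d(v_1,v_2)=r,\ v_1\not\sim v_2)$ converges to the corresponding mixed-Poisson probability, which is bounded away from $0$ for every fixed $r$. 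On $\{D_1\cap D_2\neq\emptyset\}$, which has probability of order $1/n$, one has $v_1\sim v_2$ and can further condition on $|D_1\cap D_2|$; the total count $d(v_1,v_2)$ is then a sum of a two-attribute part (same limiting behaviour as before) and a one-attribute part whose conditional mean, when $|D_1\cap D_2|=1$, converges to $\Lambda$. Third, assemble:
\[
cl(r) = \frac{\PP(v_1\sim v_2,\ d(v_1,v_2)=r)}{\PP(d(v_1,v_2)=r)} = \frac{\PP(v_1\sim v_2,\ d(v_1,v_2)=r)}{\PP(v_1\sim v_2,\ d(v_1,v_2)=r) + \PP(v_1\not\sim v_2,\ d(v_1,v_2)=r)} (1+o(1)).
\]
For $r=0$ the edge-witnessing attribute already forces at least... nothing, but the probability it creates $0$ extra common neighbours is $e^{-\Lambda}$, giving $cl(0)=p_e e^{-\Lambda}(1+o(1))$ after noting $\PP(d(v_1,v_2)=0)\to $ a positive constant and $p_e=\delta_1/n+o(1/n)$. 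For $r=1$ the numerator behaves like $p_e\cdot(\text{mixed Poisson}(=1)\text{ for the no-edge count})$... rather, one must be careful: when $v_1\sim v_2$ and $r=1$, the single common neighbour can be either the two-attribute kind or the one-attribute kind; matching the ratio against $\alpha$ via (\ref{cc}) yields $\alpha/(\alpha+(1-\alpha)e^\Lambda)$. For $r\ge 2$ the numerator still has order $1/n$ while $\PP(d(v_1,v_2)=r)$ also has order $1/n$ (two-attribute common neighbours are rare), so the no-edge contribution to the denominator is of smaller order and $cl(r)\to 1$.

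The main obstacle, and where most of the work lies, is making the Poisson approximations uniform and justifying the passage to the limit inside the conditioning on $D_1,D_2$: one needs (ii-2), the second-moment condition, to control the contribution of atypically large attribute sets $|D_1|,|D_2|$ (which would otherwise inflate $d(v_1,v_2)$ and spoil tightness), to show that $X_1^2 n/m$ is uniformly integrable so that conditional means converge and the dominated-convergence step through the mixed-Poisson limit is valid. A secondary technical point is ruling out $|D_1\cap D_2|\ge 2$ as a lower-order event (probability $o(1/n)$ under (ii-2)), so that on the edge event we may assume exactly one witnessing attribute; and checking that the ``two-attribute'' and ``one-attribute'' common-neighbour counts are asymptotically independent Poissons conditionally, which follows because they draw on disjoint attribute pools ($D_1\triangle D_2$ versus $D_1\cap D_2$) up to negligible overlap corrections. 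Once these estimates are in place, the three cases follow by the elementary bookkeeping sketched above, with the identity (\ref{cc}) used to rewrite the $r=1$ limit in terms of $\alpha$.
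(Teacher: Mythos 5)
Your overall architecture coincides with the paper's: write $cl(r)$ as the ratio $\PP(v_1\sim v_2,\,d_{12}=r)/\PP(d_{12}=r)$, decompose by $|D_1\cap D_2|$, condition on $D_1,D_2$ so that $d_{12}$ becomes $\Bin(n-2,q)$, Poissonize via Le Cam, split common neighbours into those reached through the shared attribute and those reached through two distinct attributes, use (ii-2) for uniform integrability of $X_1^2 n/m$, and discard $\{|D_1\cap D_2|\ge 2\}$ as $O(n^{-2})$. This is exactly what the paper's Lemma \ref{pagrindine} implements by estimating $p_i(r)=\PP(|D_1\cap D_2|=i,\ d_{12}=r)$ for $i=0,1$.

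There is, however, a genuine error at the centre of the sketch. On $\{D_1\cap D_2=\emptyset\}$ the conditional mean of $d_{12}$ is \emph{not} of constant order: a third vertex is a common neighbour of the non-adjacent pair with conditional probability $\approx (X_1X_3/m)(X_2X_3/m)=O(n^{-2})$, so the mean is $O(n^{-1})$ and $\PP(v_1\not\sim v_2,\ d_{12}=r)=\Theta(n^{-r})$ for $r=1,2$ (this is (\ref{zzz+})); it is emphatically not ``bounded away from $0$ for every fixed $r$.'' Taken literally your claim would make the denominator for $r=1$ of constant order and hence force $cl(1)=O(n^{-1})\to 0$, contradicting the theorem, and it also contradicts your own later (correct) remark that two-attribute common neighbours are rare. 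The missing idea is that the $r=1$ answer arises precisely because the no-edge term $p_0(1)=n^{-1}\Lambda_2{\bar z}_1^2+o(n^{-1})$ and the edge term $p_1(1)=n^{-1}{\bar z}_1^2\Lambda_1 e^{-\Lambda_1}+O(n^{-2})$ compete at the \emph{same} order $n^{-1}$; their ratio, rewritten using $\alpha=\beta^{-1/2}z_1/z_2+o(1)=\Lambda/z_2+o(1)$ and $\Lambda_2\to z_2-\Lambda$, is exactly $\alpha/(\alpha+(1-\alpha)e^{\Lambda})$. Your treatment of $r=1$ (``matching the ratio against $\alpha$'') glosses over this competition, which is the only genuinely delicate case; once the order of $p_0(r)$ is corrected, the remaining bookkeeping for $r=0$ and $r\ge 2$ goes through as you describe and as in the paper.
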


Empirical results  of simulated random intersection graphs show that
the convergence to the ``limiting shape``  in (\ref{b2-4}) is rather slow,
 see  Figures 3 and 4 below.

\begin{tm}\label{T2} Let $m,n\to\infty$. 
Assume that (i), (ii-2)  hold and $\E Z>0$. Suppose that $\beta_n\to+\infty$.
 We have  
\begin{equation}\label{b2-1}
cl(r)=
\begin{cases}
p_e(1+o(1)),
\qquad \
\qquad
\qquad
\qquad
\qquad
\ \
r=0;
\\
\beta_n^{-1/2}(z_1/z_2)(1+o(1))+O(n^{-1}),
\quad \quad
\
r=1.
\end{cases}
\end{equation}
In particular, $cl(0)=O(n^{-1})$ and $cl(1)=o(1)$. Furthermore, we have
\begin{equation}\label{b3-2}
cl(2)=
\begin{cases}
1+o(1),
\qquad 
\qquad
\qquad
\qquad
{\text{for}}
\qquad
\beta_n/n\to 0;
\\
\frac{1}{1+\beta_*(\delta_2-\delta_1)^4\delta_1^{-6}}+o(1),
\qquad
\ 
{\text{for}}
\qquad
\beta_n/n\to\beta_*\in (0,+\infty);
\\
o(1),
\qquad
\qquad
\qquad\qquad
\quad
\ \
{\text{for}}
\qquad 
\beta_n/n\to+\infty.
\end{cases}
\end{equation}
Assuming, in addition, that 
$\beta_n^k=o(n)$ for each $k=1,2,3\dots$,
we obtain
\begin{equation}\label{b2-3}
cl(r)=1+o(1),
\qquad
{\text{for}}
\qquad
r=2,3,\dots. 
\end{equation}
\end{tm}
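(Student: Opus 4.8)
The plan is to compute $cl(r)=\PP(v_1\sim v_2\mid d(v_1,v_2)=r)$ via the ratio
$\PP(v_1\sim v_2,\ d(v_1,v_2)=r)/\PP(d(v_1,v_2)=r)$, and to obtain asymptotics for numerator and denominator by conditioning on the attribute sets $D_1,D_2$ of the two marked vertices and on the sizes $X_1,X_2,X_3,\dots,X_n$. Write $N_1=|D_1|,N_2=|D_2|$ and let $T=|D_1\cap D_2|$ be the number of common attributes. Conditionally on $D_1,D_2$, the events ``$v_k$ is a common neighbour of $v_1,v_2$'' ($k\ge 3$) are independent over $k$, and each happens with probability $\approx (N_1+N_2)X_k/m$ up to lower-order corrections (a vertex $v_k$ is a common neighbour essentially when $D_k$ hits both $D_1$ and $D_2$, which for sparse sets is dominated by $D_k$ hitting $D_1\cup D_2$ once — unless $T\ge1$). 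Thus $d(v_1,v_2)$ is, conditionally, approximately Poisson with a random mean $\mu$ of order $(N_1+N_2)\E X_k\, n/m \asymp \sqrt{n/m}=\beta_n^{-1/2}$ times bounded factors; crucially $T\ge 1$ forces $v_1\sim v_2$ deterministically and also inflates $d(v_1,v_2)$.

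The key structural dichotomy is: either $v_1\sim v_2$ already because $T\ge1$ (``direct link''), or $T=0$ and every common neighbour arises from some $v_k$ carrying an attribute in $D_1$ and another in $D_2$. In the first regime each shared attribute $w\in D_1\cap D_2$ contributes, through the other vertices selecting $w$, a batch of order $n/m=\beta_n^{-1}$ common neighbours; in the second regime common neighbours come one-at-a-time at rate $\beta_n^{-1/2}$ from distinct witnessing vertices. Hence to have $d(v_1,v_2)=r$ with $r\ge 2$ the overwhelmingly likely scenario, when $\beta_n\to\infty$, is the direct-link one (for then $\{T\ge1\}$, of probability $\asymp N_1N_2/m$, still beats the $\asymp \beta_n^{-r/2}$ cost of producing $r$ independent witnesses once $r$ is large enough relative to the growth of $\beta_n$), which yields $cl(r)=1+o(1)$. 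For $r=0,1$ the direct-link scenario is excluded or barely possible, so one is in the $T=0$ regime and the Poisson-$\mu$ heuristic gives $cl(0)\sim p_e$ and $cl(1)\sim \beta_n^{-1/2} z_1/z_2$ after carrying the ratio $\PP(\text{witness exists})/\PP(d(v_1,v_2)=1)$ through; the constant $z_1/z_2$ emerges because the rate at which a generic third vertex becomes a common neighbour is size-biased by $X_k$, while the conditioning on $D_1,D_2$ producing exactly one common neighbour size-biases once more. For $r=2$ the answer interpolates: a single shared attribute ($T=1$) produces a \emph{batch} of $\mathrm{Bin}(n-2,\Theta(1/m))$ common neighbours, whose size is $\approx \mathrm{Poisson}(n/m)$; this equals $2$ with non-vanishing probability exactly when $n/m=\Theta(1)$, i.e. $\beta_n/n\to\beta_*$, and then one must weigh this against the competing $T=0$, two-independent-witnesses event (probability $\asymp \beta_n^{-1}$) — balancing $1/m^{?}$ against $\beta_n^{-1}$ and tracking the combinatorial constants gives the displayed $1/(1+\beta_*(\delta_2-\delta_1)^4\delta_1^{-6})$; for $\beta_n/n\to 0$ the batch dominates and $cl(2)\to1$, for $\beta_n/n\to\infty$ the batch is too large (typically $\gg 2$) and the two-witness mechanism also vanishes, giving $cl(2)\to0$. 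Finally, the extra hypothesis $\beta_n^k=o(n)$ for all $k$ is precisely what guarantees, for every fixed $r\ge 2$, that producing $r$ common neighbours from $r$ distinct witnesses (cost $\asymp\beta_n^{-r/2}$, and this needs $r$ witnesses each bringing one, total probability decaying faster than any fixed power) is negligible against the direct-link event of order $\asymp 1/m = 1/(\beta_n n)$; a short induction/union bound over the number of shared attributes and the number of ``solo'' witnesses then gives $cl(r)=1+o(1)$.

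I would organise the write-up as: (1) a conditioning lemma expressing $\PP(v_1\sim v_2, d(v_1,v_2)=r)$ and $\PP(d(v_1,v_2)=r)$ as expectations over $N_1,N_2$ (and $T$) of Poisson/binomial probabilities, with explicit error bounds using (ii-2) to control second moments of $X_1\sqrt{n/m}$; (2) Poisson approximation for the number of solo common neighbours and for the batch attached to each shared attribute, with total-variation bounds of order $O(\beta_n^{-1/2})$ or $O(1/m)$ respectively; (3) case analysis on $r$ and on the growth of $\beta_n$, extracting the leading constants by dominated convergence using $X_1\sqrt{n/m}\Rightarrow Z$ and identifying $z_1,z_2,\delta_1,\delta_2$; (4) the union bound closing out $r\ge 2$ under $\beta_n^k=o(n)$.

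The main obstacle I expect is step~(2)–(3) for $cl(2)$ in the critical window $\beta_n/n\to\beta_*$: here numerator and denominator are genuinely of the same order and come from \emph{two} different mechanisms (one shared attribute giving a Poisson($\approx\beta_*^{-1}$) batch versus zero shared attributes with two independent witnesses), so one cannot afford to be lossy — every $1+o(1)$ factor, every size-biasing of $X_k$, and the exact combinatorial count of how a shared attribute is ``hit'' by third vertices must be tracked to pin down the constant $\beta_*(\delta_2-\delta_1)^4\delta_1^{-6}$. The rest is, in comparison, a careful but routine Poisson-approximation bookkeeping exercise, with (ii-2) doing the heavy lifting to justify all truncations and dominated-convergence passages.
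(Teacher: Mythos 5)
Your skeleton coincides with the paper's: write $cl(r)$ as the ratio $\PP(v_1\sim v_2,\,d_{12}=r)/\PP(d_{12}=r)$, decompose both probabilities according to $T=|D_1\cap D_2|$ (the paper's events ${\cal A}_i$, with the tail $T\ge 3$ shown negligible), approximate the conditional law of $d_{12}$ given $D_1,D_2$ by a Poisson law via Le Cam's inequality, and read off which value of $T$ dominates for each $r$. The genuine gap is that the quantitative rates you assign to the two mechanisms are wrong, and the case analysis you build on them would fail. Given one shared attribute $w^*$, the number of third vertices containing $w^*$ has mean $\approx n\,\E X_3/m=z_1\sqrt{n/m}=z_1\beta_n^{-1/2}$, so the ``batch'' is $\mathrm{Poisson}(z_1\beta_n^{-1/2})$, not of order $n/m=\beta_n^{-1}$ as you state. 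When $T=0$, the rate at which a third vertex becomes a common neighbour is $\approx n\,\E[X_3^2]\,|D_1||D_2|/m^2=\Theta(1/n)$ (the paper's $\lambda_0\approx n^{-1}\Lambda_2Z_1Z_2$), not $\beta_n^{-1/2}$; hence the two-witness event has probability $\asymp n^{-2}$, not $\asymp\beta_n^{-1}$. Consequently your identification of the critical window for $cl(2)$ is internally inconsistent: ``$n/m=\Theta(1)$'' means $\beta_n=\Theta(1)$, which contradicts the standing hypothesis $\beta_n\to\infty$ and is not the regime $\beta_n/n\to\beta_*$ (i.e.\ $m\asymp n^2$). The correct balance is $\PP(T=1)\cdot\PP(\mathrm{Poisson}(z_1\beta_n^{-1/2})=2)\asymp n^{-1}\beta_n^{-1}$ against $\PP(T=0,\ \text{two witnesses})\asymp n^{-2}$, which meet exactly when $\beta_n\asymp n$ and yield the constant $z_2^4/z_1^4=(\delta_2-\delta_1)^4\delta_1^{-6}$; with your rates no such window exists and the constant cannot be recovered.

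Two further points. Your explanation of the hypothesis $\beta_n^k=o(n)$ for all $k$ is not the one that operates: at the level of exact leading terms, $p_0(r)\asymp n^{-r}$ versus $p_1(r)\asymp n^{-1}\beta_n^{-r/2}$ would only require $\beta_n=o(n^{2-2/r})$, with $r=2$ the binding case; the stronger hypothesis is needed because the error terms in the Poisson/$T$-decomposition are controlled only to precision $o(n^{-2})$ (and $p_0(r)$ for $r\ge3$ only as $o(n^{-2})$), so one must have $n^{-1}\beta_n^{-r/2}\gg n^{-2}$, i.e.\ $\beta_n^{r/2}=o(n)$, for every fixed $r$. Finally, as written the text is a plan rather than a proof: the Poisson approximation with explicit error bounds, the truncations needed under the bare second-moment condition (ii-2), and the treatment of $T\ge3$ — which constitute essentially all of the paper's Lemma 3 — are named but not carried out, and the ``careful bookkeeping'' deferred to your steps (2)--(3) starts from incorrect orders of magnitude.
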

We conclude from (\ref{b2-4}), (\ref{b2-1})  that 
 edge dependence measures $cl(1)$ and $\alpha$ are closely related.
In particular, we have
$cl(1)=1-o(1)\Leftrightarrow \alpha=1-o(1)$ 
and $cl(1)=o(1)\Leftrightarrow \alpha=o(1)$.
Furthermore, (\ref{b3-2}) tells us that the characteristic $cl(2)$ is  able to
distinguish between the cases $\beta_n=o(n)$ and $n=o(\beta_n)$.
Finally, (\ref{b2-3}) tells us that any $cl(r)$, $r=1,2,\dots$ can't distinguish between  
 sequences
  $\{\beta_n\}$ and $\{\beta_n'\}$ growing slower that any power of $n$ 
(take $\beta_n=\ln n$ and $\beta_n'=\ln^2n$, for example).

{\it Remark 1.}  It is likely that (\ref{b3-2}) can be extended to an arbitrary $r$ as follows
\begin{displaymath}
cl(r)=
\begin{cases}
1+o(1),
\qquad 
\qquad
\quad
\,
{\text{for}}
\qquad
\beta_n/n^{4-2r^{-1}}\to 0;
\\
c(r,\beta_*)+o(1),
\qquad
\  \,
{\text{for}}
\qquad
\beta_n/n^{4-2r^{-1}}\to\beta_*\in (0,+\infty);
\\
o(1),
\qquad
\quad\qquad
\quad
\ \ \,
{\text{for}}
\qquad 
\beta_n/n^{4-2r^{-1}}\to+\infty.
\end{cases}
\end{displaymath}
Here $c(r,\beta_*)=(\beta_*^{r/2}z_1^{-r-2}z_2^rz_r^2+1)^{-1}$. We note that numbers $z_i=\E Z^i$
can be expressed in terms of moments of  the asymptotic degree distribution (\ref{adegree}).

Proofs of Theorems \ref{T1} and \ref{T2} are given in Sect. 5.

Fig. 3  illustrates the convergence to a step function shown by Theorem \ref{T1}. Here 
we plot clustering function 
(\ref{Cl1}) of simulated 
random intersection graphs $G_i=G(n_i,m_i,P)$, where $n_i=m_i=10^{2+i}$, 
$i=1,2,3$, and $P(10)=1$.

In Fig. 4 we plot (\ref{Cl1}) for simulated
random intersection graphs $G_i=G(n,m,P_i)$, where $n=m=10^4$
and $P_i(3^i)=1$, $i=1,2,3$.
Fig. 4 illustrates the influence of the size of random sets. 

\begin{figure}
    \begin{center}
        \includegraphics[scale=0.5]{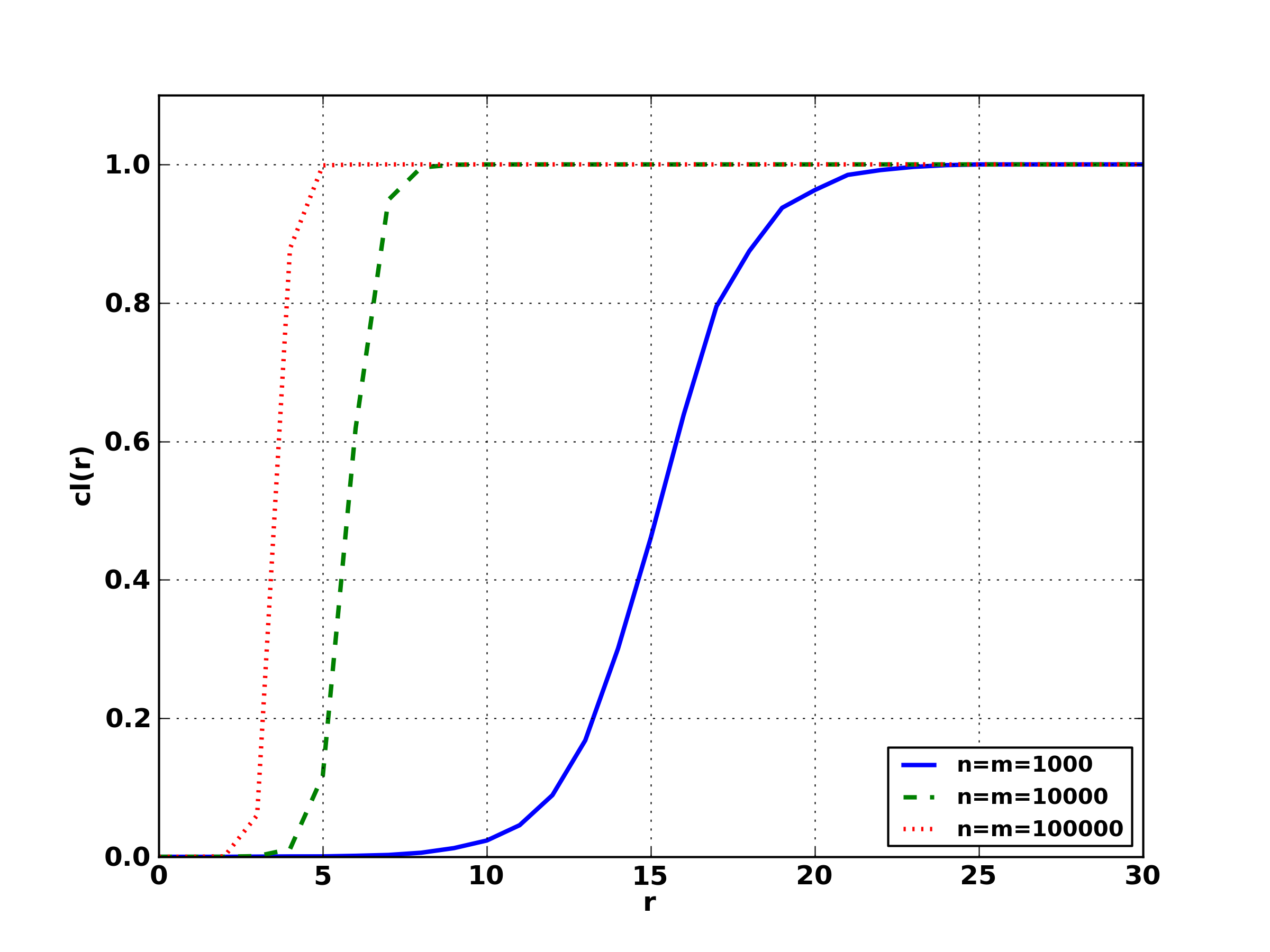}
    \end{center}
    \caption{\small Convergence to the step function for random intersection graphs with all sets of size 10.}
 \label{fig.3}
\end{figure}

\begin{figure}
    \begin{center}
        \includegraphics[scale=0.5]{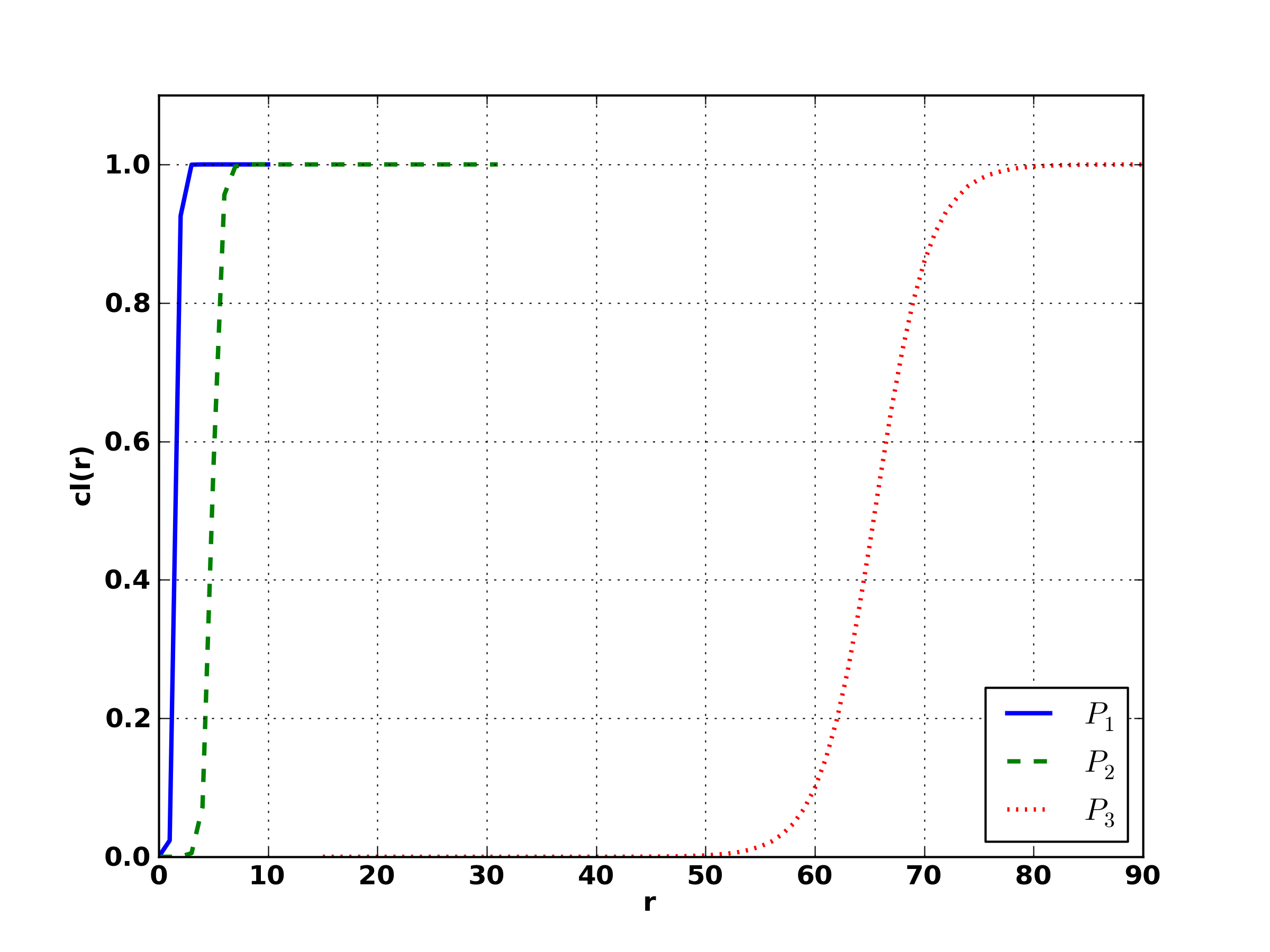}
    \end{center}
    \caption{\small Clustering function of random intersection graphs with $n = m = 10000$ and $P_i(3^i) = 1$, $i=1,2,3$.}
 \label{fig.4}
\end{figure}

{\bf Inhomogeneous graph}. The 
 {\it inhomogeneous} random intersection graph  $G_1(n,m,P_1,P_2)$ on the vertex 
set $V=\{v_1,\dots, v_n\}$ is obtained as follows.
We first generate independent random variables $A_1,\dots, A_n, B_1,\dots, B_m$ such that
each $A_i$ has the  probability distribution $P_1$ and each $B_j$ has the probability distribution
$P_2$. Then, conditionally on the realized values $\{A_i,B_j\}_{i,j=1}^{n,m}$, we include the attribute $w_j\in W$
in the set $D_i$ with probability $p_{ij}=\min\{1, A_iB_j(nm)^{-1/2}\}$ independently 
for each $i$ and $j$ (see \cite{Barbour2011}, 
\cite{BloznelisD2012+},  \cite{Bradonjic2010}, \cite{Shang2010}).  Our motivation of 
studying 
this random graph model  is that its clustering function approximates 
empirical 
data remarkably well, see Figure 8 below.
 
We consider a sequence of inhomogeneous intersection graphs 
$\{{\tilde G}_n=G_1(n,m,P_1,P_2)\}$,
where  $P_1$, $P_2$
remain fixed while
 $m=m_n$ and $n$ tend to infinity. We  
denote $a_k=\E A_1^k$ and $b_k=\E B_1^k$. 
A simple calculation  shows (see Section 5 below)
that the edge 
probability $p_e=\PP(v_1\sim v_2)$ of $G_1(n,m,P_1,P_2)$ satisfies
\begin{equation}\label{p-e-inh}
p_e=a_1^2b_2n^{-1}+o(n^{-1}). 
\end{equation}
Hence, $\{{\tilde G}_n\}$ is a sequence of sparse graphs. We remark that this sequence admits
a power law asymptotic degree distribution  \cite{BloznelisD2012+}.

In Theorem \ref{T3} below we show a first order asymptotics of the 
clustering function $cl(\cdot)$ in the case where the ratio $\beta_n=m/n$ 
has a non-zero finite 
limit. In addition, we show that ${\tilde G}_n$ admits
a nonvanishing clustering  coefficient
 $\alpha=\alpha({\tilde G}_n)=\PP(v_1\sim v_2|v_1\sim v_3, v_2\sim v_3)$.

\begin{tm}\label{T3} Let $m,n\to\infty$. 
Assume that $0<\E A_1^2<\infty$ and $0<\E B_1^3<\infty$. Suppose that 
$\beta_n\to \beta \in(0,+\infty)$.
Then we have
\begin{equation}\label{clcoef}
 \alpha=\frac{b_3\kappa}{ b_3\kappa+\sqrt{\beta}}+o(1)
 \qquad\qquad
 \qquad\qquad \ \
\end{equation}
 and 
\begin{equation}
cl(r)=
\begin{cases}
a_1^2b^*_2n^{-1}(1+o(1)),
\qquad 
\quad
\ 
\
r=0;
\\
\label{NH}
\frac{b^*_3\kappa}{b^*_3\kappa+\sqrt{\beta}}(1+o(1)),
\qquad 
\qquad
r=1;
\\
1-o(1),
\qquad
\qquad
\quad
\quad
\quad
\ \ \
r\ge 2.
\end{cases}
\end{equation}
Here $\kappa=a_1a_2^{-1}b_2^{-2}$ and \ $b^*_k=\E B_1^ke^{-a_1B_1/\sqrt{\beta}}$. 
\end{tm}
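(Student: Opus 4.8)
The plan is to compute, for each fixed $r$, the joint asymptotics of the two events $\{v_1\sim v_2\}$ and $\{d(v_1,v_2)=r\}$ by conditioning on the latent weights $A_1,A_2,B_1,\dots,B_m$ and on the number of attributes shared by $v_1$ and $v_2$. First I would set up notation: write $S=D_1\cap D_2$ for the set of common attributes of the two distinguished vertices, so that $\{v_1\sim v_2\}=\{S\neq\emptyset\}$, and observe that $d(v_1,v_2)=r$ means exactly $r$ of the remaining $n-2$ vertices pick up at least one attribute from $D_1\cup D_2$. The key structural fact, valid for intersection graphs with non-vanishing clustering, is that a common \emph{neighbour} of $v_1$ and $v_2$ is overwhelmingly likely to attach through a \emph{common attribute} (i.e.\ an attribute in $S$) rather than through one attribute of $D_1\setminus D_2$ and a different one of $D_2\setminus D_1$; the latter contributes only a lower-order term because, in the sparse regime, $|D_1\setminus D_2|$ and $|D_2\setminus D_1|$ are typically of order $\sqrt{m/n}=O(1/\sqrt{\beta_n})$ while the product of two such independent selections across $n-2$ vertices is negligible compared with the single-attribute mechanism. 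This is the same phenomenon that underlies the clustering-coefficient formula (\ref{cc}) and its analogue (\ref{clcoef}), so I would first establish (\ref{clcoef}) as a warm-up, reusing the computation of $\alpha$ from \cite{BloznelisD2012+}, and then carry the refinement through to $cl(r)$.

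The main computation is then organized by conditioning on $|S|=s$. Given the weights, each attribute $w_j$ lies in $S$ with probability $p_{1j}p_{2j}\approx A_1A_2B_j^2/(nm)$, so $|S|$ is approximately Poisson with a random mean $\mu=A_1A_2 b_2/n \cdot (m/m)= A_1A_2 b_2 \beta_n^{1/2}/\sqrt{\cdot}$ — more precisely $\E(|S|\mid A_1,A_2)\sim A_1A_2 b_2 m/(nm)\cdot m = A_1A_2 b_2/\sqrt{\beta_n n}\cdot\sqrt{m}$; I would pin this down so that $\PP(|S|=1)\sim a_1^2 b_2 n^{-1}$, recovering (\ref{p-e-inh}), and $\PP(|S|\geq 2)=O(n^{-2})$. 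Thus for the purpose of the \emph{conditional} probability $cl(r)=\PP(v_1\sim v_2,d(v_1,v_2)=r)/\PP(d(v_1,v_2)=r)$ one may, up to $o(1)$ relative error in the relevant cases, replace $\{v_1\sim v_2\}$ by $\{|S|=1\}$. Conditionally on $|S|=1$ with the single shared attribute having weight $B$, a third vertex $v_k$ with weight $A_k$ becomes a neighbour of both $v_1,v_2$ essentially iff it selects that shared attribute, which happens with probability $\approx A_kB/\sqrt{nm}$; averaging over $A_k$ and over the $n-2$ choices gives that the number of common neighbours is asymptotically Poisson with mean $\lambda(B)=a_1 B/\sqrt{\beta_n}$ (after using $m=\beta_n n$). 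Hence $\PP(d(v_1,v_2)=r,|S|=1)\sim n^{-1}a_1^2\,\E\!\big[B_1^2 e^{-\lambda(B_1)}\lambda(B_1)^r/r!\big]$, and the factor $B_1^2$ is exactly the size-biasing coming from $\PP(|S|=1\mid B)\propto B^2$. Recognizing $\E[B_1^2 e^{-a_1B_1/\sqrt\beta}\lambda^r]$ in terms of $b_3^*$ and $b_2^*$ gives the numerator for $r=0,1$.

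For the denominator $\PP(d(v_1,v_2)=r)$ one must also account for the $|S|=0$ contribution: when $v_1\not\sim v_2$ a common neighbour arises through the two-attribute mechanism, contributing a term of order $n^{-1}$ for $r=1$ (comparable to the $|S|=1$ term), an $n^{-1}$ term times $(\beta_n/n)^{r-1}$-type factors for $r\geq 2$, and this is what produces the three regimes in (\ref{b3-2})-style behaviour — although here, with $\beta$ fixed and finite, the two-attribute mechanism beats the single-attribute one only for $r\geq 2$, giving $cl(r)=1-o(1)$ there, while for $r=0$ the event $\{S\neq\emptyset,d=0\}$ has probability $\sim a_1^2 b_2^* n^{-1}$ against a denominator that is $\Theta(1)$. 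Assembling: for $r=0$, $cl(0)\sim a_1^2 b_2^* n^{-1}$; for $r=1$, $cl(1)\to b_3^*\kappa/(b_3^*\kappa+\sqrt\beta)$ where the $\sqrt\beta$ in the denominator is the normalized contribution of the $|S|=0$ common-neighbour mechanism and $\kappa=a_1 a_2^{-1}b_2^{-2}$ emerges from the ratio of the two $\Theta(n^{-1})$ terms after carefully tracking the $a_2$ (from size-biasing a vertex that carries \emph{two} attributes, one in $D_1$ and one in $D_2$) and $b_2$ factors; for $r\geq2$ the single-attribute (i.e.\ $v_1\sim v_2$) term dominates and $cl(r)\to1$.

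The main obstacle, and the step requiring the most care, is the rigorous control of the two-attribute common-neighbour mechanism and its interaction with the event $\{|S|\geq1\}$: one needs to show that conditionally on $\{v_1\sim v_2\}$ (equivalently $|S|=1$) the \emph{additional} common neighbours created by the two-attribute route are negligible in number for the $r\geq2$ asymptotics, while conditionally on $\{v_1\not\sim v_2\}$ that same route produces the correct Poisson-type count governing the denominator; both require uniform integrability-type bounds coming from $\E A_1^2<\infty$ and $\E B_1^3<\infty$ (the third moment of $B_1$ is needed precisely because the relevant attribute is size-biased once by belonging to $S$ and potentially again by being selected by a third vertex). Technically I would handle this by a second-moment / Poisson-approximation argument (in the spirit of \cite{Barbour2011}) conditionally on the weight vector, truncating the weights at a slowly growing level to control the rare large-weight events, and then removing the truncation using the stated moment hypotheses. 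The remaining manipulations — verifying (\ref{p-e-inh}), identifying the constants $\kappa$, $b_k^*$, and checking the $r\geq2$ case — are routine once the Poisson heuristics above are made precise.
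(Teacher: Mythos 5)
Your proposal follows essentially the same route as the paper's proof: write $cl(r)$ as $\PP(v_1\sim v_2,\,d_{12}=r)$ over $\PP(d_{12}=r)$, replace $\{v_1\sim v_2\}$ by the event of exactly one shared attribute (the $|S|\ge 2$ contribution being $O(n^{-2})$), apply a conditional Poisson/Le Cam approximation with mean $a_1B/\sqrt{\beta_n}$ to the count of common neighbours attaching through that attribute (which produces the size-biased moments $b_k^*$), and bound the two-attribute common-neighbour mechanism separately, which supplies the $a_1^2a_2b_2^2n^{-1}$ term for $\overline{p}^*_1$ and hence the constants $\kappa$ and $\sqrt{\beta}$ exactly as in (\ref{p*r})--(\ref{p**}). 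The constants, the truncation-of-weights device, and the regime distinctions you identify all agree with the paper's argument (the one sentence claiming the two-attribute mechanism ``beats'' the single-attribute one for $r\ge 2$ is a slip, contradicted and corrected by your own concluding sentence).
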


The proof of Theorem \ref{T3} is given in Sect. 5. 


\section{Discussion}

The first order asymptotics (\ref{b2-4}), (\ref{b3-2}), (\ref{b2-3}) and  
(\ref{NH}) suggests that 
the clustering function $cl(\cdot)$ of a  large intersection graph with a square integrable 
asymptotic degree distribution 
can be
approximated by a step - like function. 
Furthermore, $cl(1)$ is closely related to the clustering 
coefficient. 

Simulations in Figures 3 and 4 show that the convergence in (\ref{b2-4}) can 
be rather slow and we observe a sigmoid function approximation of the step function. 
Furthermore, the larger is the average 
degree, the more remote is the ``step`` from the origin 
and the more gradual is the slope of the clustering function. 

Clustering functions of real networks considered in Figures 1 and 3 have even more gradual 
slope, a phenomena perhaps related to the inhomogeneity 
of the degree sequence. We remark that the  actor network and the Facebook are considered
as having power law degree sequences 
which do not admit a 
finite (theoretical) second moment, see, 
e.g., \cite{Durret2007}, \cite{Foudalis2011}.
In order to
learn more about  the influence of the 
inhomogeneity of the degree sequence on the slope of the clustering function $r\to cl(r)$ 
we select various subnetworks of  real networks according to certain regularity conditions
satisfied by their degree sequences. We observe that the  inhomogeneity (heavy tail) 
of the degree sequence 
affects the slope of the clustering function: the heavier the tail 
the more gradual is the slope of the clustering function. We illustrate these 
observations in Figures 5 and 6.

Figure 5 plots clustering function (\ref{Cl1}) of subgraphs of the first university 
network (see Sect 2.) sampled as follows.
${\cal G}_1$ is the subgraph
that includes all vertices  of degree not larger than $50$. It has $n_0=7165$ vertices. 
${\cal G}_2$ is a subgraph  induced by $n_0$ vertices
drawn uniformly at random (without replacement) from the vertices of degree not 
larger than $150$. ${\cal G}_3$ is a subgraph of induced by $n_0$ vertices
drawn uniformly at random (without replacement) from the set of all vertices. 
Now all three graphs have the same number of vertices.
\begin{figure}
    \begin{center}
        \includegraphics[scale=0.5]{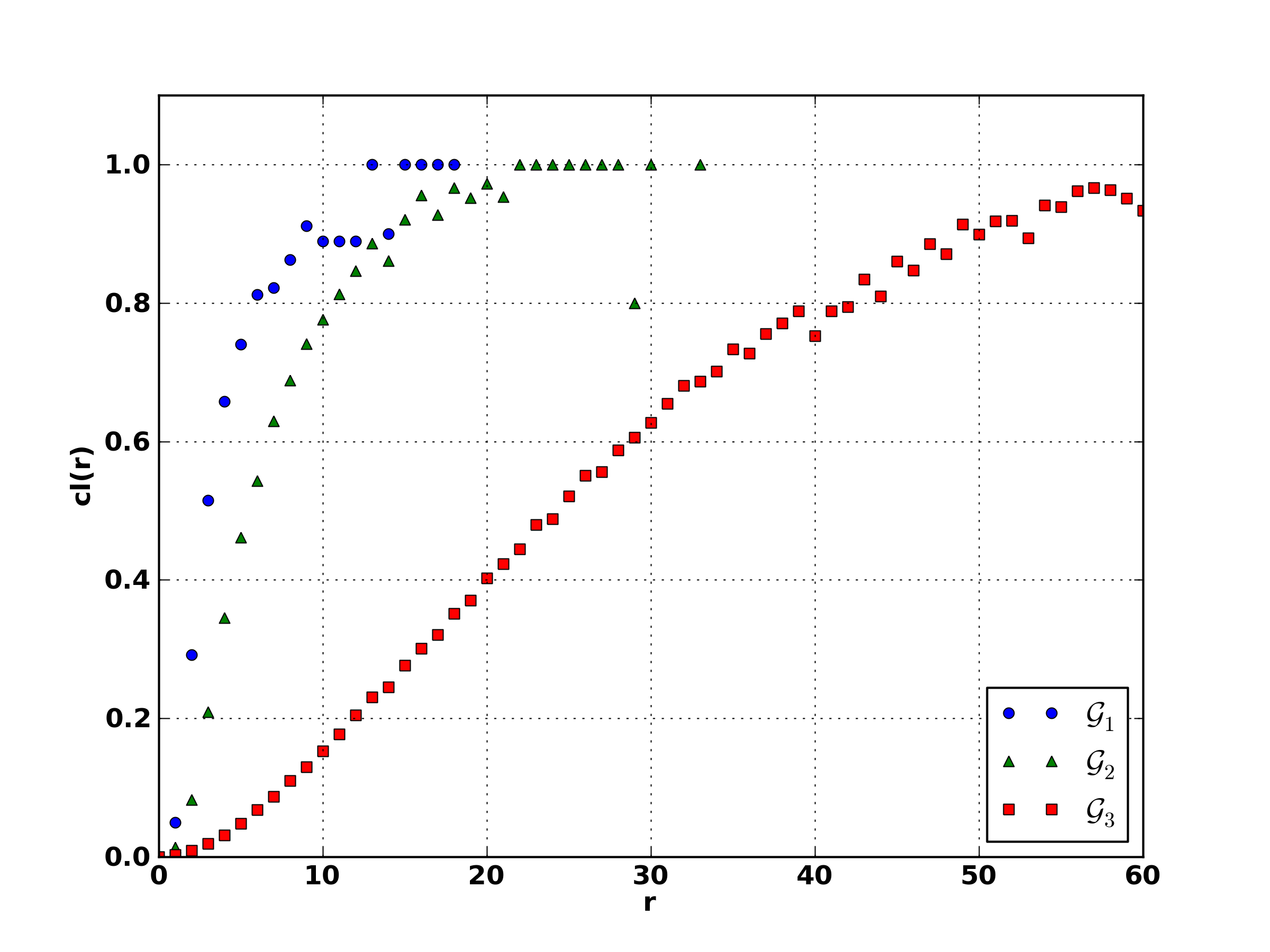}
    \end{center}
    \caption{\small Sampling subgraphs with degree constraints.}
 \label{fig.5}
\end{figure}

In Figure 6 we plot two subgraphs of the 
French actor network (data from \cite{actornetwork}).
 The subgraph ${\cal G}_4$ is induced by the set of 
marked vertices obtained as follows: we put a mark on each vertex $v$
with probability  $d^{-\tau}(v)$ and independently of the 
other vertices. Choosing $\tau=0.5$ we obtain a random subgraph denoted ${\cal G}_4$.  
  In our case the realized number of marked 
vertices $n_1=8871$.  ${\cal G}_5$ is the subgraph of the French 
actor network induced by $n_1$ vertices
drawn uniformly at random (without replacement) from the set of all vertices. 
Now both subgraphs have the same number of vertices, but the 
degree sequence of ${\cal G}_4$ is much more regular than that of ${\cal G}_5$.
\begin{figure}
    \begin{center}
        \includegraphics[scale=0.5]{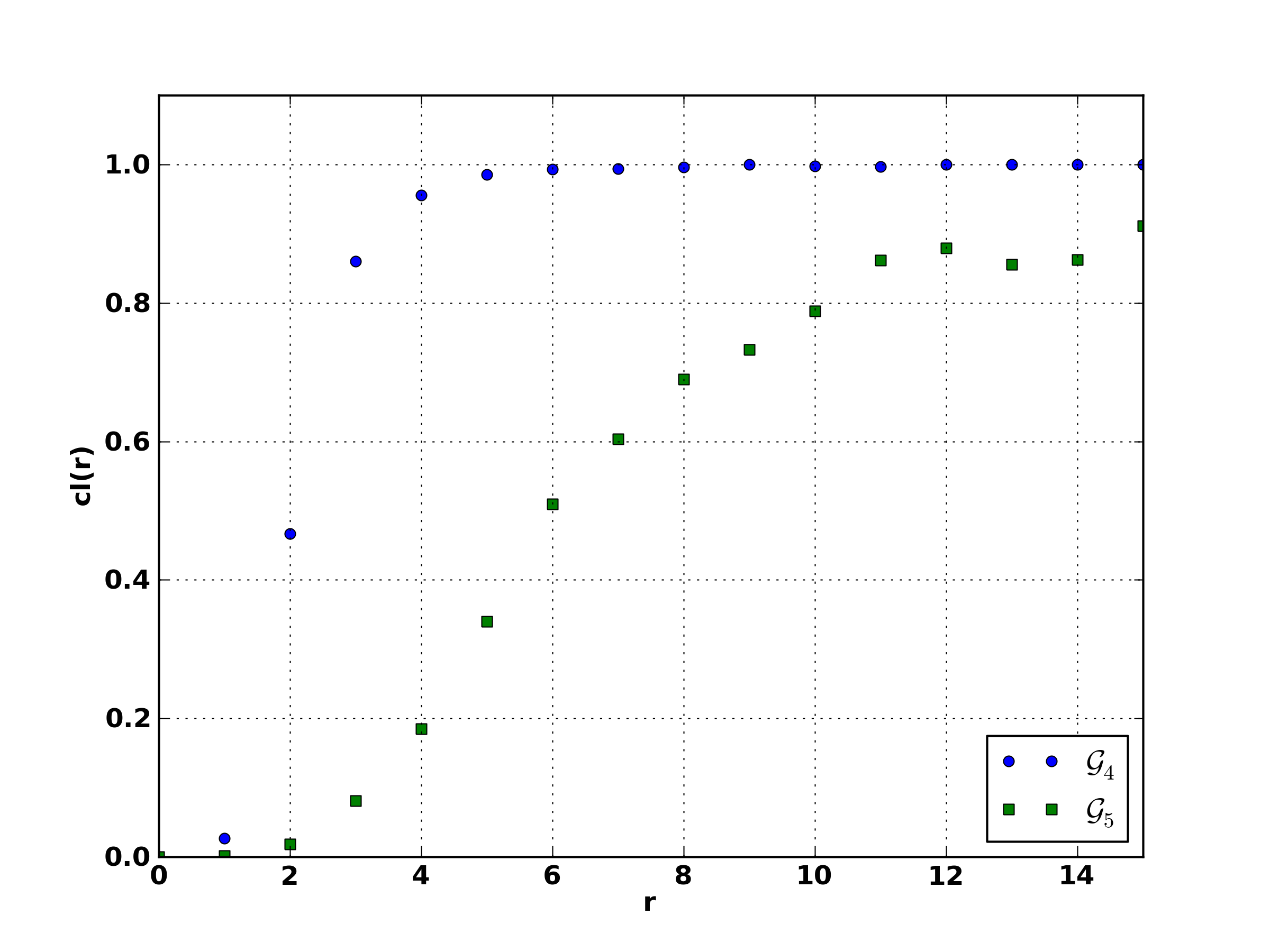}
    \end{center}
    \caption{\small Sampling a subgraph with a randomized degree constraint.}
 \label{fig.6}
\end{figure}

Finally, we examine how well a random  intersection graph fits the real data. 
For this purpose we consider a memoryless actor network obtained as follows.
Assume every  actor of a given actor graph has forgotten about the titles of movies 
he or she 
acted in and only  remembers the number of movies. 

We first  simulate 
an instance of the {\it active} memoryless graph where each actor chooses films 
independently and uniformly at random from a given set of ${\tilde m}$ films so 
that the number of 
films chosen by each actor is the same as in the true actor graph. 
In the active memoryless
graph all films have equal chances to be selected by any of actors. 
We remark that in the case where  
${\tilde m}=m$, i.e., the number of films in the active memoryless  graph is the same 
as in the real 
underlying actor network, the expected degree of the memoryless graph does not match 
the average degree of the real network. We can easily 
adjust the number of films (of the memoryless graph) so that these degrees match.
We denote this number  $m'$  and call the active memoryless graph 
with ${\tilde m}=m'$  {\it adjusted} one. 
In Figure 7 we plot clustering function (\ref{Cl1}) of two instances of  memoryless graphs 
for comparison with the underlying French actor 
network: one with the true number of films  and another with the adjusted number of 
films.

We secondly simulate an instance of the {\it inhomogeneous} memoryless graph where 
an actor $v_i$ 
chooses the 
film $w_j$ with probability $a_ib_jM^{-1}$ independently for each $i$ and $j$. 
Here the numbers $a_i, b_j$ 
are observed characteristics of the underlying actor network: $v_i$ acted in $a_i$ films;
$b_j$ actors acted in the film $w_j$.  $M=\sum_{1\le i\le n}a_i=\sum_{1\le j\le m}b_j$ 
is the total number of links
of the bipartite graph
where actors are linked to films.
In Figure 8 we plot clustering function (\ref{Cl1}) of an instance of the inhomogeneous memoryless graph
of the French actor network. Here we observe a remarkable accuracy of the approximation of the real 
clustering function by that of the memoryless graph.  We remark that in comparison with 
active memoryless graphs of Figure 7, that only use the data $a_1,\dots, a_n$, the 
inhomogeneous memoryless graph of Figure 8 uses, in addition, the numbers
$b_1,\dots, b_m$.

We remark that Theorems \ref{T1}, \ref{T2} and \ref{T3} establish a first 
order asymptotics to the 
clustering function $cl(\cdot)$ of random intersection graphs
having a square integrable asymptotic degree distribution. An interesting question 
were about a
power law random intersection graph whose asymptotic degree distribution has
infinite second 
moment:
Is there a limiting shape of the clustering function for $n\to+\infty$ in this case? 
Is there a theoretically valid approximation to the clustering function  that
explains the gradual slope of $cl(\cdot)$ of observed empirical plots?
It would also be interesting to learn about a higher order asymptotics of 
the clustering function 
$cl(\cdot)$ that refines results of Theorems \ref{T1}, \ref{T3} and could perhaps
 better explain 
the empirical data.

\begin{figure}
    \begin{center}
        \includegraphics[scale=0.5]{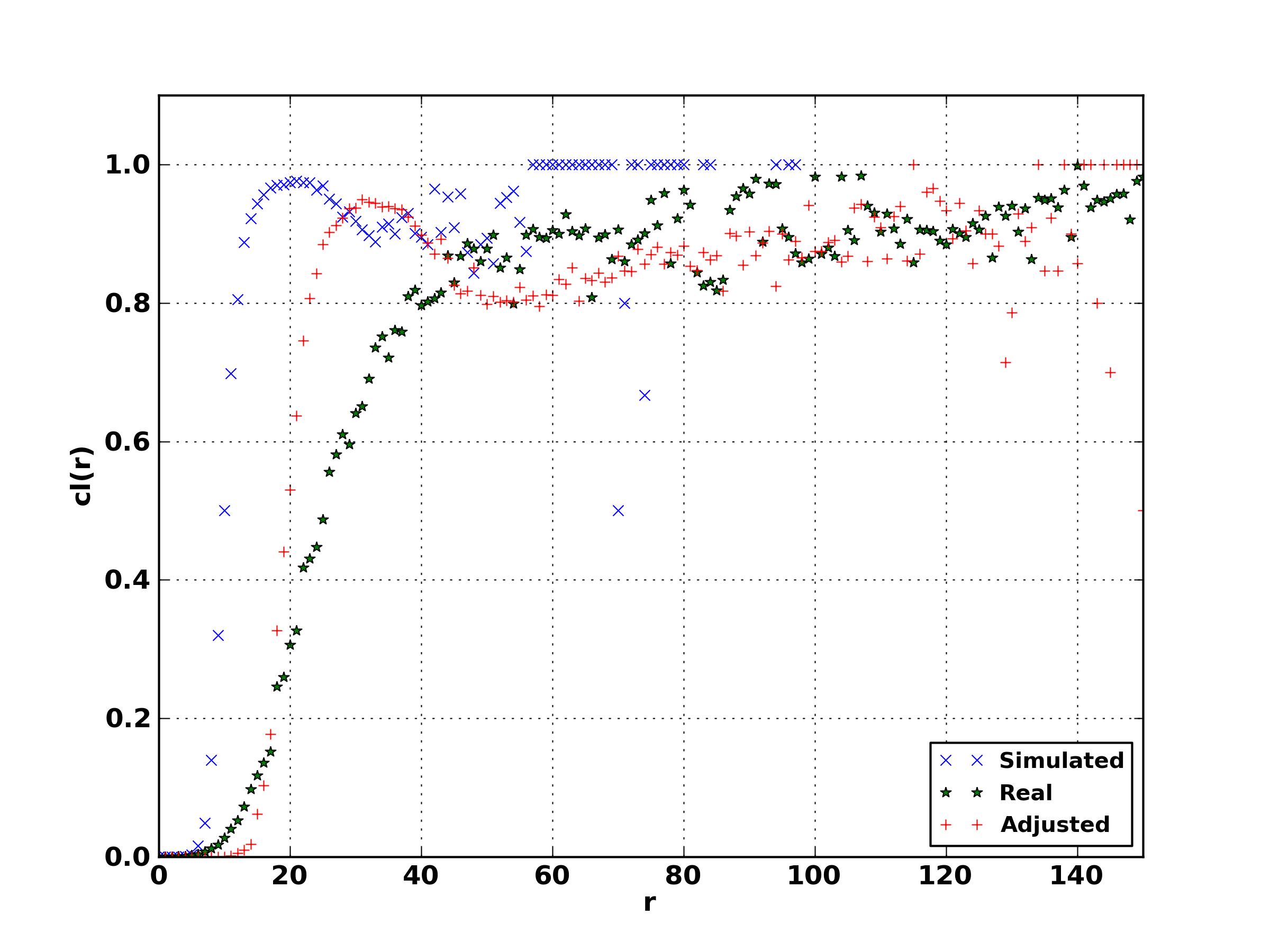}
    \end{center}
    \caption{\small     The French actor network and two simulated active memoryless networks.}
 \label{fig.7}
\end{figure}

\begin{figure}
    \begin{center}
        \includegraphics[scale=0.5]{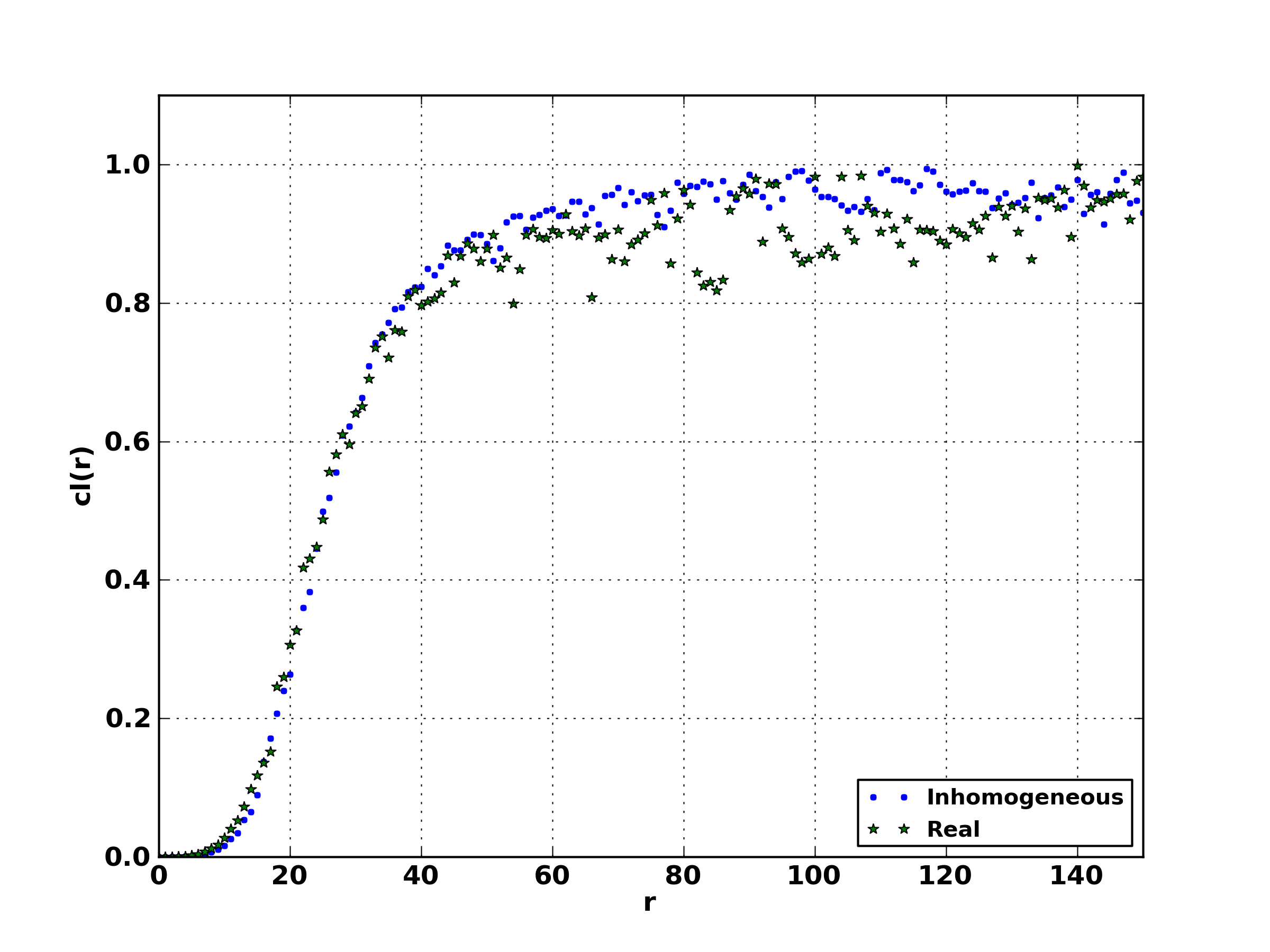}
    \end{center}
    \caption{\small     The French actor network and a simulated inhomogeneous memoryless network.}
 \label{fig.8}
\end{figure}

\section{Proofs}

The section is organized as follows: we first we formulate two auxiliary lemmas,
 then we prove Theorems \ref{T1}, \ref{T2} and \ref{T3}.

\begin{lem}\label{LeCamLemma} {\rm (See, e.g., \cite{Steele})} Let $S={\mathbb I}_1+ {\mathbb I}_2+\dots+ {\mathbb I}_n$ be the sum of independent random indicators
with probabilities $\PP({\mathbb I}_i=1)=p_i$. Let $\Lambda$ be Poisson random variable with mean $p_1+\dots+p_n$. The total variation
distance between the distributions $P_S$ of $P_{\Lambda}$ of $S$ and $\Lambda$
\begin{equation}\label{LeCam}
\sup_{A\subset \{0,1,2\dots \}}|\PP(S\in A)-\PP(\Lambda\in A)|\le \sum_{i}p_i^2.
\end{equation}
\end{lem}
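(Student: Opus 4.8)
The plan is to deduce (\ref{LeCam}) from a coupling, reducing everything to the single‑indicator case. The basic fact I would use is that for any two non‑negative integer valued random variables $X,X'$ living on a common probability space one has $\sup_{A}|\PP(X\in A)-\PP(X'\in A)|\le\PP(X\neq X')$: indeed $\PP(X\in A)\le\PP(X'\in A)+\PP(X\in A,\,X'\notin A)\le\PP(X'\in A)+\PP(X\neq X')$, and the reverse estimate follows by symmetry. Hence it suffices to construct, on one probability space, a variable $S'\stackrel{d}{=}S$ together with a variable $\Lambda'$ having the Poisson$(p_1+\dots+p_n)$ law and satisfying $\PP(S'\neq\Lambda')\le\sum_i p_i^2$.

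First I would treat one indicator. Given $p\in[0,1]$, I couple $\mathbb{I}\sim\mathrm{Bernoulli}(p)$ with $Y\sim\mathrm{Poisson}(p)$ via a single uniform $U$ on $[0,1]$: set $\mathbb{I}=\mathbf{1}\{U>1-p\}$ and $Y=k$ whenever $e^{-p}\sum_{j<k}p^j/j!<U\le e^{-p}\sum_{j\le k}p^j/j!$. The elementary inequalities $1-p\le e^{-p}$, $(1+p)e^{-p}\le 1$ and $(1+p)e^{-p}\ge 1-p$ (all easily checked for $p\in[0,1]$) imply $\mathbb{I}=Y=0$ on $[0,1-p]$ and $\mathbb{I}=Y=1$ on $(e^{-p},(1+p)e^{-p}]$, so $\{\mathbb{I}\neq Y\}\subseteq(1-p,e^{-p}]\cup((1+p)e^{-p},1]$, an event of probability $(e^{-p}-1+p)+(1-(1+p)e^{-p})=p(1-e^{-p})\le p^2$.

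Then I would take independent copies $(\mathbb{I}_1,Y_1),\dots,(\mathbb{I}_n,Y_n)$ of such couplings with parameters $p_1,\dots,p_n$ and put $S'=\sum_i\mathbb{I}_i$, $\Lambda'=\sum_i Y_i$. By independence $S'\stackrel{d}{=}S$, while $\Lambda'$, being a sum of independent Poisson variables, is Poisson with mean $\sum_i p_i$; moreover $\{S'\neq\Lambda'\}\subseteq\bigcup_i\{\mathbb{I}_i\neq Y_i\}$, so the union bound gives $\PP(S'\neq\Lambda')\le\sum_i\PP(\mathbb{I}_i\neq Y_i)\le\sum_i p_i^2$. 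Together with the observation of the first paragraph this yields (\ref{LeCam}).

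I do not expect a real obstacle here; the only point that needs care is the ordering of the thresholds in the one‑indicator coupling, i.e. the three elementary inequalities above guaranteeing that the ``agreement'' intervals are as claimed. As a remark, the coupling can be bypassed: $d_{\mathrm{TV}}$ is subadditive under convolution, and a direct computation gives $d_{\mathrm{TV}}(\mathrm{Bernoulli}(p),\mathrm{Poisson}(p))=\tfrac12\big[(e^{-p}-1+p)+(p-pe^{-p})+(1-(1+p)e^{-p})\big]=p(1-e^{-p})\le p^2$, and summing over $i$ reproduces the stated bound.
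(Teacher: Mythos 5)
Your argument is correct and complete: the one-indicator coupling via a common uniform, the verification that the disagreement probability is $p(1-e^{-p})\le p^2$, and the union bound over independent couplings together give exactly the bound (\ref{LeCam}). The paper itself gives no proof of this lemma, deferring to the cited reference \cite{Steele}; your coupling argument is the classical proof of Le Cam's inequality, so there is nothing further to reconcile.
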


\begin{lem}\label{xxx} {\rm (\cite{Bloznelis2011+})} Given integers $1\le s\le d_1\le d_2\le m$, let 
 $D_1,D_2$ be independent random subsets of the set $W=\{1,\dots, m\}$ such that
$D_1$ (respectively $D_2$) is uniformly distributed in the class of subsets of $W$ of size $d_1$ (respectively $d_2$).
The  probabilities 
$p':=\PP(|D_1\cap D_2|=s)$
and
$p'':=\PP(|D_1\cap D_2|\ge s)$ satisfy
\begin{equation}\label{xx}
\left(1-\frac{(d_1-s)(d_2-s)}{m+1-d_1}\right)p^*_{d_1,d_2,s}
\
\le 
\
p'
\
\le 
\
p'' 
\le 
\
 p^*_{d_1,d_2,s},
\end{equation}
Here we denote $p^*_{d_1,d_2,s}={\tbinom{d_1}{s}}{\tbinom{d_2}{s}} {\tbinom{m}{s}}^{-1}$.
\end{lem}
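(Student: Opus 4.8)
The plan is to prove the two bounds by different means: a first moment estimate for the upper bound $p''\le p^*_{d_1,d_2,s}$, and an exact product expansion for the lower bound. For the upper bound I would introduce, for each $s$-element subset $T\subset W$, the indicator $\mathbb{I}_T=\mathbf{1}\{T\subseteq D_1\cap D_2\}$ and set $N=\sum_{|T|=s}\mathbb{I}_T=\binom{|D_1\cap D_2|}{s}$. Since $\{|D_1\cap D_2|\ge s\}=\{N\ge 1\}$, Markov's inequality gives $p''=\PP(N\ge 1)\le \E N$. By the independence of $D_1,D_2$ and the standard identity $\PP(T\subseteq D_i)=\binom{d_i}{s}\binom{m}{s}^{-1}$ for a fixed $s$-set $T$, each summand contributes $\binom{d_1}{s}\binom{d_2}{s}\binom{m}{s}^{-2}$, and summing over the $\binom{m}{s}$ choices of $T$ yields $\E N=p^*_{d_1,d_2,s}$. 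Together with the trivial inclusion $\{|D_1\cap D_2|=s\}\subseteq\{|D_1\cap D_2|\ge s\}$ this settles $p'\le p''\le p^*_{d_1,d_2,s}$.

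For the lower bound I would start from the exact hypergeometric probability $p'=\binom{d_2}{s}\binom{m-d_2}{d_1-s}\binom{m}{d_1}^{-1}$, obtained by conditioning on $D_2$ and counting the $d_1$-subsets $D_1$ meeting it in exactly $s$ points. Dividing by $p^*_{d_1,d_2,s}$ and simplifying with $\binom{d_1}{s}\binom{m}{s}^{-1}=\binom{m-s}{d_1-s}\binom{m}{d_1}^{-1}$ reduces the ratio to
\[
\frac{p'}{p^*_{d_1,d_2,s}}=\frac{\binom{m-d_2}{d_1-s}}{\binom{m-s}{d_1-s}}=\prod_{j=0}^{d_1-s-1}\Bigl(1-\frac{d_2-s}{m-s-j}\Bigr).
\]
The crucial point is that conditioning on the \emph{larger} set $D_2$ makes the product run over only $d_1-s$ factors, so that the smallest denominator appearing is $m-s-(d_1-s-1)=m+1-d_1$, precisely the quantity in the claimed bound.

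Finally I would apply the elementary Weierstrass inequality $\prod_j(1-x_j)\ge 1-\sum_j x_j$, valid for $x_j\in[0,1]$, with $x_j=(d_2-s)/(m-s-j)$. Bounding each denominator from below by $m+1-d_1$ gives $\sum_{j=0}^{d_1-s-1}x_j\le (d_1-s)(d_2-s)/(m+1-d_1)$, hence $p'/p^*_{d_1,d_2,s}\ge 1-(d_1-s)(d_2-s)/(m+1-d_1)$, which is the assertion. I would also verify the hypotheses: when $d_1+d_2-s\le m$ one checks $x_j\le 1$ for all $j\le d_1-s-1$, while in the degenerate range $d_1+d_2-s>m$ one has $p'=0$ and at the same time $(d_1-s)(d_2-s)\ge m+1-d_1$, so the stated lower bound is nonpositive and holds trivially. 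The main obstacle is the bookkeeping in the second paragraph: a naive Bonferroni estimate on $\PP(|D_1\cap D_2|\ge s)$, or conditioning on the smaller set $D_1$, produces the weaker denominator $m+1-d_2$; obtaining the sharper $m+1-d_1$ forces the asymmetric choice of conditioning on $D_2$ together with the exact telescoping of the product.
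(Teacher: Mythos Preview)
Your argument is correct. The upper bound via the first-moment (union) estimate on $N=\sum_{|T|=s}\mathbf{1}\{T\subseteq D_1\cap D_2\}$ is clean, and the lower bound via the exact hypergeometric ratio
\[
\frac{p'}{p^*_{d_1,d_2,s}}=\frac{\binom{m-d_2}{d_1-s}}{\binom{m-s}{d_1-s}}
=\prod_{j=0}^{d_1-s-1}\Bigl(1-\frac{d_2-s}{m-s-j}\Bigr)
\]
followed by the Weierstrass product inequality is exactly what is needed; your observation that conditioning on the larger set $D_2$ is what produces the denominator $m+1-d_1$ rather than $m+1-d_2$ is the right explanation for the asymmetry in the statement. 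The handling of the degenerate range $d_1+d_2-s>m$ is also fine.

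As for comparison: the present paper does not supply its own proof of this lemma. It is quoted verbatim from \cite{Bloznelis2011+} and used as a black box in the proof of Lemma~\ref{pagrindine}. So there is no in-paper argument to benchmark against; your write-up simply fills in a self-contained proof of a cited result, and it does so by what is essentially the standard route (union bound plus hypergeometric product expansion) that one would expect the original reference to use as well.
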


 {\bf 5.1. Active graph.}
By $X_i=X_{ni}$ we denote the size of the set $D_i$ in $G_{(n)}$. 
Furthermore, we write
$Z_i=Z_{ni}=\beta_n^{-1/2}X_{ni}$ and  put $Z_{01}:=Z$. We denote ${\bar z}_k=\E Z_{n1}^k$
and introduce the function
 \begin{equation}\label{z*2}
 t\to \varphi(t)=\sup_{n\ge 0}\E Z_{n1}^2{\mathbb I}_{\{Z_{n1}\ge t\}}.
 \end{equation}
  We remark  that conditions (i), (ii-2) 
imply $\varphi(t)=o(1)$ as $t\to+\infty$ (see, e.g.,  \cite{Bloznelis2011+}) and
  $z_{*2}:= \sup_{n\ge 0}\E Z_{n1}^2<\infty$.
By ${\tilde \PP}$ and ${\tilde \E}$  
we denote the conditional probability  and  expectation given $X_1$, $X_2$.
By ${\tilde \PP}'$ and ${\tilde \E}'$ we denote the  conditional probability  and  expectation given
 $D_1$, $D_2$. 
We introduce events ${\cal A}=\{v_1\sim v_2\}$, ${\cal A}_i=\{|D_1\cap D_2|=i\}$ and 
probabilities 
\begin{displaymath}
p_i(r)=\PP({\cal A}_i\cap \{d_{12}=r\}),
\end{displaymath}
By $f_r(\lambda)=e^{-\lambda}\lambda^r/r!$ we denote the Poisson probability.

\begin{proof}[Proof of Theorems \ref{T1} and \ref{T2}] 
We have
\begin{equation}\label{b4-3}
cl(r)
=
\PP({\cal A}|d_{12}=r)
=
\frac{\PP({\cal A}\cap \{d_{12}=r\})}{\PP(d_{12}=r)}.
\end{equation}
In order to evaluate the numerator we write ${\cal A}=\cup_{i\ge 1}{\cal A}_i$ and 
apply the total probability formula
\begin{equation}\label{A-k}
\PP({\cal A}\cap \{d_{12}=r\})
=
\sum_{i\ge 1}p_i(r)=\sum_{1\le i\le k}p_i(r)+R_k(r).
\end{equation}
Here $R_k(r)=\sum_{i> k}p_i(r)\le \PP(|D_1\cap D_2|\ge k+1)$. 
Similarly we expand the denominator of (\ref{b4-3})
\begin{equation}\label{d-12-k}
\PP(d_{12}=r)
=
\sum_{i\ge 0}p_i(r)=\sum_{0\le i\le k}p_i(r)+R_k(r).
\end{equation}

In order to prove Theorem \ref{T1} we choose  
$k=1$ in (\ref{A-k}), (\ref{d-12-k})  and invoke the asymptotic expressions of $p_i(r)$ 
and the upper bound for $ \PP(|D_1\cap D_2|\ge k+1)$ 
shown in Lemma  \ref{pagrindine}. Then, observing that  as $n\to+\infty$ we have
 ${\bar z}_k= z_k+o(1)$, for $k=1,2$, 
and $\alpha=\beta^{-1/2}z_1/z_2+o(1)$ (see (\ref{cc})), we obtain (\ref{b2-4}).

Theorem \ref{T2} is obtained in the same way, but now we choose $k=2$. 
\end{proof}


Given a sequence of 
random variables $\{Y_n\}$ and $r=0,1,\dots$
we write $Y_n\prec O_r$  to denote the fact that 
$\E |Y_n|=O(n^{-2})$, for $r=0,1$, and $\E |Y_n|=O(n^{-2}\beta_n^{-1/2})+o(n^{-2})$, 
for $r\ge 2$.

\begin{lem}\label{pagrindine} Assume that $\beta_n\to \beta\in (0,+\infty]$. 
Suppose that (i), (ii-2) hold. Denote $\Lambda_1=\beta_n^{-1/2}{\bar z}_1$ 
and $\Lambda_2={\bar z}_2-\beta_n^{-1/2}{\bar z}_1$.
We have as $n\to+\infty$
\begin{eqnarray}\label{zzz}
&&
p_0(0)=1-o(1),
\qquad
p_0(r)=o(n^{-2}),
\qquad
\
r\ge 3,
\\
\label{zzz+}
&&
p_0(r)=n^{-r}(r!)^{-1}\Lambda_2^r{\bar z}_r^2+o(n^{-r}), 
\qquad
\quad
\
\quad
r=1,2,
\\
\label{www}
&&
p_1(r)=n^{-1}{\bar z}_1^2f_r(\Lambda_1)+O_r, 
\qquad
\qquad\qquad\quad
r\ge 0,
\\
\label{ttt}
&&
p_2(r)=2^{-1}n^{-2}f_r(2\Lambda_1)\Lambda_2^2
+O_2,
\qquad
\qquad\quad
r\ge 0.
\end{eqnarray}
 Furthermore, we have
\begin{equation}\label{b4-1}
\PP(|D_1\cap D_2|\ge 3)=o(n^{-2})
\qquad
{\text{and}}
\qquad
\PP(|D_1\cap D_2|\ge k)=O(n^{-k}), 
\quad k=1,2.
 \end{equation}
\end{lem}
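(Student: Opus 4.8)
The plan is to compute each $p_i(r)=\PP(\mathcal{A}_i\cap\{d_{12}=r\})$ by conditioning on the sizes $X_1,X_2$ and the intersection $D_1\cap D_2$, and to control the number $d_{12}$ of common neighbours of $v_1,v_2$ via a Poisson approximation. First I would use Lemma \ref{xxx}: conditioning on $X_1=d_1$, $X_2=d_2$ the random sets are uniform of those sizes, so $\tilde\PP(\mathcal{A}_i)=\PP(|D_1\cap D_2|=i)$ is sandwiched between $(1-o(1))p^*_{d_1,d_2,i}$ and $p^*_{d_1,d_2,i}$ with $p^*_{d_1,d_2,i}=\binom{d_1}{i}\binom{d_2}{i}\binom{m}{i}^{-1}$. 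Since under (i), (ii-2) the typical sizes are $X_j=\Theta(\sqrt{m/n})=\Theta(\sqrt{\beta_n})$, we get $p^*_{d_1,d_2,1}\approx d_1d_2/m=Z_1Z_2\beta_n^{-1/2}\cdot\beta_n^{1/2}/\sqrt{n}\cdot\ldots$; carefully, $p^*_{d_1,d_2,i}$ is of order $(Z_1Z_2)^i\beta_n^{i/2}/m^i\cdot(\text{stuff})$, which after taking expectations in $Z_1,Z_2$ and using $m=\beta_n n$ yields the $n^{-i}$ scaling. This already gives (\ref{b4-1}): $\PP(|D_1\cap D_2|\ge k)\le\E p^*_{X_1,X_2,k}=O(n^{-k})$ for $k=1,2$, and for $k=3$ one extra factor of $\beta_n^{1/2}/m=1/(\beta_n^{1/2}n)=o(1)\cdot n^{-1}$ forces $o(n^{-2})$; the moment bound $z_{*2}<\infty$ plus the truncation function $\varphi(t)=o(1)$ handles the contribution of atypically large $X_j$.

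Next, for the conditional law of $d_{12}$ given $D_1,D_2$: a vertex $v_\ell$ ($\ell\ge 3$) is a common neighbour of $v_1$ and $v_2$ iff $D_\ell$ meets both $D_1$ and $D_2$. If $|D_1\cap D_2|=i$, write $D_1\cup D_2$ as the disjoint union of the common part (size $i$) and the two private parts (sizes $X_1-i$, $X_2-i$). The indicator that $D_\ell$ hits both $D_1$ and $D_2$ has probability, to first order, $\Lambda_1/n$ when $i\ge 1$ (hitting the common part suffices, and the private parts contribute lower order) and $\Lambda_2^2/\text{(something)}$ — more precisely the "hit both private parts" event — when $i=0$; this is exactly the source of the asymmetry between the $f_r(\Lambda_1)$ factors in (\ref{www}), (\ref{ttt}) and the $\Lambda_2^r\bar z_r^2$ factor in (\ref{zzz+}). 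I would apply Lemma \ref{LeCamLemma} (Le Cam) to the sum $d_{12}=\sum_{\ell\ge 3}\mathbb{I}_\ell$ of the conditionally independent indicators: the Poisson parameter is $\sum_\ell \tilde\PP'(\mathbb{I}_\ell=1)=(n-2)\tilde\PP'(\mathbb{I}_3=1)$, which concentrates around $\Lambda_1$ (case $i\ge 1$) or around $0$ with the two-private-hits giving the $\Lambda_2^r$ term (case $i=0$), and the Le Cam error $\sum_\ell p_\ell^2=O(n\cdot n^{-2})=O(n^{-1})$ is negligible against the prefactors $n^{-i}$. Multiplying the conditional Poisson probability $f_r(\Lambda_1)$ (or $f_r(2\Lambda_1)$ for $p_2$, since then two common elements each offer a hit) by $\PP(\mathcal{A}_i)$ and taking expectation over $Z_1,Z_2$ produces $\bar z_1^2$, $\Lambda_2^2$ etc., giving (\ref{zzz+})–(\ref{ttt}); the remainder terms collect into $O_r$ as defined.

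The main obstacle will be the bookkeeping of the error terms with the required uniformity in $n$ — in particular showing that replacing the exact conditional hitting probability by its first-order value $\Lambda_1/n$ (resp.\ the $i=0$ expression) costs only $O_r$, and that the contribution of large $X_1,X_2$ is controlled by $\varphi(t)=o(1)$ and $z_{*2}<\infty$ rather than by a non-existent higher moment. One must be careful that for $r\ge 2$ the claimed error is $O(n^{-2}\beta_n^{-1/2})+o(n^{-2})$, which is genuinely smaller than $n^{-2}$ when $\beta_n\to\infty$; this forces one to extract the leading Poisson term precisely and to bound the next correction (coming from $|D_1\cap D_2|=2$ inside the $p_1$ computation, or from second-order terms in the hitting probability) by the extra $\beta_n^{-1/2}$ factor. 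The $p_0(r)=o(n^{-2})$ for $r\ge 3$ claim in (\ref{zzz}) is then immediate: $p_0(r)\le\PP(d_{12}\ge 3,\,|D_1\cap D_2|=0)$ and the $i=0$ hitting probability is $O(n^{-1}\beta_n^{-1/2})$ per vertex, so $d_{12}$ is stochastically below a Poisson of mean $o(1)$ and $\PP(d_{12}\ge r)=o(n^{-r})\subset o(n^{-2})$.
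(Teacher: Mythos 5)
Your overall strategy coincides with the paper's: condition on $D_1,D_2$, note that $d_{12}\sim\Bin(n-2,q_i)$, Poissonize via Le Cam (Lemma \ref{LeCamLemma}), identify the leading per-vertex hitting probability as ${\tilde q}_1=\Lambda_1/n$ and ${\tilde q}_2=2\Lambda_1/n$ (hit a common attribute) versus the ``hit both private parts'' term ${\tilde q}_0=n^{-2}\Lambda_2 Z_1Z_2$ for $i=0$, and control ${\tilde \PP}({\cal A}_i)$ by Lemma \ref{xxx}. This is exactly the paper's decomposition $q_0=q_{01}+q_{02}$, $q_1=q_{11}+q_{12}$, $q_2=q_{21}+\dots+q_{24}$, and your remark that the $r\ge 2$ error must be extracted to precision $O(n^{-2}\beta_n^{-1/2})+o(n^{-2})$ is the right concern.

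The one step that would fail as written is your treatment of the third-intersection and third-neighbour bounds. For (\ref{b4-1}) with $k=3$ you propose $\PP(|D_1\cap D_2|\ge 3)\le \E p^*_{X_1,X_2,3}$ with ``one extra factor $o(1)\cdot n^{-1}$''; but $p^*_{X_1,X_2,3}\le (Z_1Z_2)^3/(6n^3)$, and the extra factor over $\varkappa_2$ is the \emph{random} quantity $X_1X_2/(3(m-2))\approx Z_1Z_2/(3n)$, so the resulting expectation is a third-moment functional of $Z$, which need not be finite under (ii-2). The same issue arises in your argument for $p_0(r)=o(n^{-2})$, $r\ge 3$, where $\PP(d_{12}\ge 3\mid D_1,D_2)\lesssim\lambda_0^3$ is again cubic in $Z_1Z_2$ (and your stated per-vertex order $O(n^{-1}\beta_n^{-1/2})$ for $i=0$ is off: it is of order $Z_1Z_2\,n^{-2}$, giving a conditional mean of order $Z_1Z_2\,n^{-1}$). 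The function $\varphi$ alone does not repair this, because the obstruction is not merely atypically large $X_j$ but the possible divergence of the whole third moment. The paper's device is to spend only a fractional power of the dangerous factor: it writes ${\bar p}_3={\bar p}_3^{2/3}{\bar p}_3^{1/3}\le \varkappa_3^{2/3}\bigl(n^{-1/2}+{\overline{\bf I}}_1+{\overline{\bf I}}_2\bigr)$, so that only $(Z_1Z_2)^2$ enters the expectation, and it truncates at $X_j<m^{1/2}n^{-1/4}$ (equivalently $Z_j<n^{1/4}$) so that one surplus power of ${\tilde\lambda}_0$ can be traded for a deterministic $n^{-1/2}$, the complementary event being $o(n^{-2})$ by $\varphi(n^{1/4})=o(1)$. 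Some such interpolation is needed at these two places; the rest of your bookkeeping follows the paper's route.
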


\begin{proof}[Proof of Lemma \ref{pagrindine}] 
Before the proof we introduce some notation and collect auxiliary inequalities. Then we give 
an outline of the proof. Afterwards we prove (\ref{zzz}), (\ref{zzz+}) 
and (\ref{www}), (\ref{ttt}).

By $c_*$ we denote a generic positive constant. 
By ${\mathbb I}_{\cal B}$ we denote the indicator of an event
 ${\cal B}$ and write ${\overline {\mathbb I}}_{\cal B}=1-{\mathbb I}_{\cal B}$. In the proof we use several indicators
\begin{eqnarray}\nonumber
&&
{\mathbb I} \  \ ={\mathbb I}_{\{X_1+X_2<\varepsilon^2n\beta_n^{1/2}\}},
\quad
{\mathbb I}_j={\mathbb I}_{\{X_j< 0.5 \varepsilon^2n\beta_n^{1/2}\}},
\quad
{\mathbb I}_{*j}={\mathbb I}_{\{X_j\le \beta_n^{1/2}\varepsilon^{-1}\}},
\\
\nonumber
&&
{\mathbb I}_{\star j}={\mathbb I}_{\{X_j\le \varepsilon m\}},
\qquad
\qquad
\
{\bf I}_j={\mathbb I}_{\{X_j<m^{1/2}n^{-1/4}\}},
\quad
{\bf I}_{*j}={\mathbb I}_{\{X_j\le 0.5 m\}}.
\end{eqnarray}
Some of them depend on  $\varepsilon>0$, value of which will be clear from the context. 
We denote 
\begin{equation}\label{g23-1}
 {\tilde q}_0=n^{-2}\Lambda_2 Z_1Z_2,
\qquad
{\tilde q}_1=n^{-1}\Lambda_1,
\qquad
{\tilde q}_2=2n^{-1}\Lambda_1,
\end{equation}
and, for $i=1,2$ we write
\begin{displaymath}
{\tilde\lambda}_i=n{\tilde q}_i,
\qquad
\lambda_i=(n-2) q_i,
\qquad
 q_i={\tilde \PP}'(v_1\sim v_3, v_2\sim v_3){\mathbb I}_{{\cal A}_i}, 
\qquad
\varkappa_i=(X_1)_i(X_2)_i/(i!(m)_i).
\end{displaymath}
We note that (\ref{xx}) implies
\begin{equation}\label{g19-1}
 \varkappa_i(1-X_1X_2/(m-X_1))
\le
{\tilde\PP}({\cal A}_i)
\le 
\varkappa_i.
\end{equation}
In particular, we have
\begin{equation}
\label{g19-2}
{\tilde \PP}({\cal A}_i)
\le 
{\tilde \PP}({\cal A}_i)({\bf I}_1{\bf I}_2+{\overline{\bf I}}_1+{\overline{\bf I}}_2)
\le 
\varkappa_i{\bf I}_1{\bf I}_2+{\overline{\bf I}}_1+{\overline{\bf I}}_2\le
n^{-i/2}+{\overline{\bf I}}_1+{\overline{\bf I}}_2.
\end{equation}
We 
will 
use the following  properties of the function
$\lambda\to f_r(\lambda)$.
For $r=0,1,\dots$, it follows from the
mean value theorem
$f_r(t)-f_r(s)=f_r'(\xi)(t-s)$, where 
$0<s\le \xi \le t$, combined with inequalities
$|f_r'(\xi)|\le 1$ and 
$|f_{2+r}'(\xi)|\le \xi$ that
\begin{equation}\label{meanvalue}
|f_r(s)-f_r(t)|\le |s-t|
\qquad
{\text{and}}
\qquad
|f_{2+r}(s)-f_{2+r}(t)|\le (s+t)|s-t|.
\end{equation}

Now we outline the proof. In order to evaluate $p_i(r)$  we write
\begin{equation}\label{pir}
p_i(r)=
\E ({\tilde \E}'({\mathbb I}_{{\cal A}_i}{\mathbb I}_{\{d_{12}=r\}}))
=
\E ({\mathbb I}_{{\cal A}_i}{\tilde \E}'{\mathbb I}_{\{d_{12}=r\}})
=
\E ({\mathbb I}_{{\cal A}_i}{\tilde \PP}'(d_{12}=r))
\end{equation}
%
and observe that, given $D_1, D_2$ satisfying $|D_1\cap D_2|=i$,  the 
random variable
\begin{displaymath}
d_{12}=\sum_{3\le j\le n}{\mathbb I}_{\{v_1\sim v_j\}}{\mathbb I}_{\{v_2\sim v_j\}}
\end{displaymath}
 has binomial distribution $\Bin(n-2,q_i)$. We first approximate 
${\tilde \PP}'(d_{12}=r)$ in (\ref{pir})
 by the Poisson probability $f_r(\lambda_i)$.
Then, we approximate $\lambda_i$ by ${\tilde\lambda}_i$, and $f_r(\lambda_i)$
by $f_r({\tilde \lambda}_i)$. 
We obtain
\begin{equation}\label{pir222}
p_i(r)
=
\E ({\mathbb I}_{{\cal A}_i}f_r({\tilde \lambda}_i))
+
\E ({\mathbb I}_{{\cal A}_i}\Delta_{r,i})
=
\E ({\tilde \PP}({\cal A}_i)f_r({\tilde \lambda}_i))
+
\E ({\mathbb I}_{{\cal A}_i}\Delta_{r,i}),
\end{equation}
where, for $|D_1\cap D_2|=i$, we denote
\begin{eqnarray}\label{g25-1}
&&
\Delta_{r,i}:={\tilde \PP}'(d_{12}=r)-f_r({\tilde \lambda}_i)=\Delta_{r,i}'+\Delta_{r,i}'',
\\
\nonumber
&&
\Delta_{r,i}'={\tilde \PP}'(d_{12}=r)-f_r(\lambda_i),
\qquad
\Delta_{r,i}''=f_r(\lambda_i)-f_r({\tilde \lambda}_i).
\end{eqnarray}
Next we show that the remainder term $\E ({\mathbb I}_{{\cal A}_i}\Delta_{r,i})$
of (\ref{pir222}) is negligible. For this purpose we 
estimate using LeCam's lemma (see Lemma \ref{LeCamLemma})
\begin{equation}\label{pirf}
 |\Delta'_{r,i}|\le nq^2_i,
 \end{equation}
and estimate $\Delta_{r,i}''$ 
combining 
(\ref{meanvalue}) with the approximations $q_i\approx {\tilde q}_i$. 
We briefly explain these approximations.
Let $\{w_1^*,\dots, w_i^*\}$ denote the intersection 
$D_1\cap D_2$ provided 
it is non empty. Denote $n_j=|D_3\cap D_j|$, $j=1,2$.
We  split
\begin{displaymath}
q_0=q_{01}+q_{02},
\qquad
q_1=q_{11}+q_{12}, 
\qquad
q_2=q_{21}+q_{22}+q_{23}+q_{24},
\end{displaymath}
where
\begin{eqnarray}\nonumber
&&
q_{01}={\tilde \PP}'(n_1=1, n_2=1){\mathbb I}_{{\cal A}_0},
\qquad \
q_{02}={\tilde \PP}'(n_1+n_2\ge 3, n_1\ge 1, n_2\ge 1){\mathbb I}_{{\cal A}_0},
\\
&&
q_{11}={\tilde \PP}'(w_1^*\in D_3){\mathbb I}_{{\cal A}_1},
\qquad
\quad
\ \ \
\quad
q_{12}={\tilde \PP}'(w_1^*\notin D_3, n_1\ge1, n_2\ge 1){\mathbb I}_{{\cal A}_1},
\\
\nonumber
&&
q_{21}={\tilde \PP}'(w_1^*\in D_3, w_2^*\notin D_3){\mathbb I}_{{\cal A}_2},
\quad
q_{22}={\tilde \PP}'(w_1^*\notin D_3, w_2^*\in D_3){\mathbb I}_{{\cal A}_2},
\\
\nonumber
&&
q_{23}={\tilde \PP}'(w_1^*,w_2^*\in D_3){\mathbb I}_{{\cal A}_2},
\qquad
\quad
\ \ 
q_{24}={\tilde \PP}'(w_1^*,w_2^*\notin D_3, n_1\ge 1, n_2\ge 1){\mathbb I}_{{\cal A}_2},
\end{eqnarray}
and approximate
  $q_0\approx q_{01}\approx {\tilde q}_0{\mathbb I}_{{\cal A}_0}$, 
$q_1\approx q_{11}={\tilde q}_1{\mathbb I}_{{\cal A}_1}$
and $q_2\approx q_{21}+q_{22}\approx {\tilde q}_2{\mathbb I}_{{\cal A}_2}$.


{\it Proof of (\ref{zzz}), (\ref{zzz+}).}  In 
order to prove (\ref{zzz}),  (\ref{zzz+}) we show that
\begin{eqnarray}\label{g13-1}
&&
\E ({\mathbb I}_{{\cal A}_0}\Delta_{r,0})
=
o(n^{-r\wedge 2}), 
\qquad
r\ge 0
\\
\label{g14-9}
&&
\E ({\mathbb I}_{{\cal A}_0} f_r({\tilde \lambda}_0))
=
 (r!)^{-1}\E{\tilde \lambda}_0^r+o(n^{-r}),
\qquad
r=0,1,2,
\\
\label{g25-5}
&&
\E  ({\mathbb I}_{{\cal A}_0}f_r({\tilde \lambda}_0))=o(n^{-2}),
\qquad
r\ge 3.
 \end{eqnarray}
We firstly  prove (\ref{g13-1}). In the case where $\beta<\infty$ we find $n_0>0$ such that
$\beta<2\beta_n$ for $n\ge n_0$. In the case where $\beta=+\infty$ we find $n_0$ such that
$\beta_n>1$ for $n\ge n_0$. 
In order to prove (\ref{g13-1}) we 
show that for any  $0<\varepsilon<\min\{0.5\beta^{1/2}, 0.1\}$ and $n\ge n_0$ we have
\begin{eqnarray}\label{g14-1}
&&
\E({\mathbb I}_{{\cal A}_0}|\Delta_{r,0}|)
\le 
c_*n^{-3}+c_*n^{-r\wedge 2}R_1(\varepsilon)+c_*n^{-2}\varepsilon^{-4}R_2(\varepsilon),
\\
\nonumber
&&
R_1(\varepsilon):=\varphi(\varepsilon^{-1})+\varepsilon+m^{-1}+n^{-1},
\qquad 
R_2(\varepsilon):=\varphi(0.5\varepsilon^2 n)(1+\varepsilon^{-4}n^{-2}).
\end{eqnarray}
We remark that
(\ref{g14-1}) combined with the relation $\lim_{t\to+\infty}\varphi(t)=0$ 
implies (\ref{g13-1}).

Let us prove (\ref{g14-1}). Given $\varepsilon$,  we write
 $\Delta_{r,0}=\Delta_{r,0}{\mathbb I}+\Delta_{r,0}{\overline{\mathbb I}}$ and show that
\begin{eqnarray}
\label{g14-3}
&&
\E |\Delta_{r,0}|{\overline{\mathbb I}}
\le
\E {\overline{\mathbb I}}
\le 
c_*n^{-2}\varepsilon^{-4}R_2(\varepsilon),
\\
\label{g14-2}
 &&
\E {\mathbb I}_{{\cal A}_0}|\Delta_{r,0}|{\mathbb I}
\le  
c_*n^{-3}+c_*n^{-r\wedge 2}R_1(\varepsilon).
\end{eqnarray}
The first inequality of (\ref{g14-3}) is obvious. In order to prove the second one 
we combine the inequalities
\begin{displaymath}
 \varepsilon^{4}n^2\E {\overline {\mathbb I}}\le \beta_n^{-1}\E(X_1+X_2)^2{\overline {\mathbb I}}
\le 2\beta_n^{-1}\E (X_1^2+X_2^2){\overline {\mathbb I}}=4\beta_n^{-1}\E X_1^2{\overline {\mathbb I}},
\end{displaymath}
 which follow from  Markov's inequality, with the inequalities
\begin{displaymath}
 \beta_n^{-1}\E X_1^2{\overline {\mathbb I}}\le 
\beta_n^{-1}\E X_1^2({\overline {\mathbb I}}_1+{\overline {\mathbb I}}_2)
\le 
\beta_n^{-1}
\E X_1^2({\overline {\mathbb I}}_1+4\varepsilon^{-4}n^{-2}\beta_n^{-1}X_2^2{\overline {\mathbb I}}_2)
\le
c_* R_2(\varepsilon).
\end{displaymath}
Here we applied the inequality 
${\overline{\mathbb I}}\le {\overline{\mathbb I}}_{1}+{\overline{\mathbb I}}_{2}$ and then Markov's 
inequality. 

In order to prove (\ref{g14-2}) 
we write $\Delta_{r,0}{\mathbb I}=\Delta_{r,0}'{\mathbb I}+\Delta_{r,0}''{\mathbb I}$, 
see (\ref{g25-1}), and invoke the inequalities
\begin{equation}\label{g27-5}
\E {\mathbb I}_{{\cal A}_0}|\Delta_{r,0}'|{\mathbb I}
\le  
c_*n^{-3}
\qquad
{\text{ and}}
\qquad 
\E {\mathbb I}_{{\cal A}_0}|\Delta_{r,0}''|{\mathbb I}
\le 
c_*n^{-r\wedge 2}R_1(\varepsilon).
\end{equation} 
The first inequality of (\ref{g27-5}) follows from  (\ref{g23-1}), (\ref{pirf}) 
 and inequalities 
$q_0^2\le 2{\tilde q}_0^2+2(q_0-{\tilde q}_0)^2$, and
\begin{equation}\label{g14-7}
 {\mathbb I}_{{\cal A}_0}|q_0-{\tilde q}_0|{\mathbb I}
\le 
c_*n^{-1}m^{-1}X_1X_2R_1(\varepsilon).
\end{equation}
The second inequality of (\ref{g27-5}) follows from (\ref{meanvalue}) and (\ref{g14-7}).
%

We complete the proof of (\ref{g13-1}) by showing (\ref{g14-7}).
  To this aim we prove that for  $D_1,D_2$ 
satisfying $|D_1\cap D_2|=0$ the following inequalities hold true
\begin{eqnarray}\label{g13-3}
 &&
(1-3\varepsilon){\tilde q}_0{\mathbb I}
-
\varphi(\varepsilon^{-1})\frac{X_1X_2}{nm}{\mathbb I}
\le 
q_{01}{\mathbb I}
\le 
(1+2\varepsilon){\tilde q}_0{\mathbb I},
\\
\label{g13-7}
 &&
 q_{02}{\mathbb I}
\le 
2n^{-1}m^{-1}X_1X_2(\varphi(\varepsilon^{-1})+2\varepsilon z_{*2}).
\end{eqnarray}
Let us prove  (\ref{g13-3}).
We write
\begin{eqnarray}
&&
q_{01}={\tilde \E}'q_{01}^*,
\qquad
{\text{where}}
\qquad
q_{01}^*={\tilde \PP}'(n_1=1, n_2=1|X_3)=\tau_1\tau_2,
\\
&&
\tau_1={\tilde \PP}'(n_1=1|X_3),
\qquad
\tau_2={\tilde \PP}'(n_2=1|n_1=1,X_3),
\end{eqnarray}
and apply  (\ref{xx}) to probabilities $\tau_1$ and $\tau_2$. We obtain
\begin{displaymath}
\frac{X_1X_2(X_3)_2}{m^2}(1-\theta_1)(1-\theta_2)
\le
\tau_1\tau_2
\le 
\frac{X_1X_2(X_3)_2}{m^2}\theta_3.
\end{displaymath}
Here $\theta_1=\frac{(X_1-1)(X_3-1)}{m-X_1+1}$, $\theta_2=\frac{(X_2-1)(X_3-2)}{m-X_1-X_2+1}$ and $\theta_3=\frac{m}{m-X_1}$.
Next, we observe that, by our choice of $\varepsilon$, we have 
$\varepsilon\le \beta_n^{1/2}$ for $n\ge n_0$. In particular, the 
inequality $X_1+X_2\le \varepsilon^2n\beta_n^{1/2}$
implies $X_1+X_2\le \varepsilon m$. Assuming, in addition, that 
$X_3\le \varepsilon^{-1}\beta_n^{1/2}$, we obtain
$X_iX_3\le \varepsilon m$, for $i=1,2$. These inequalities imply 
$\theta_i\le \varepsilon/(1-\varepsilon)$, $i=1,2$, and $\theta_3\le 1+2\varepsilon$. 
Note that $\varepsilon<0.1$.
Hence, we have
\begin{displaymath}
(1-\theta_1)(1-\theta_2)\ge 1-\theta_1-\theta_2\ge 1-3\varepsilon.
\end{displaymath}
Now, we write
\begin{displaymath}
\frac{X_1X_2(X_3)_2}{m^2}(1-3\varepsilon){\mathbb I}_{*3}{\mathbb I} 
\le
{\mathbb I}_{*3}{\mathbb I}\tau_1\tau_2
\le
\tau_1\tau_2{\mathbb  I}
\le 
\frac{X_1X_2(X_3)_2}{m^2}(1+2\varepsilon){\mathbb I}
\end{displaymath}
and, using  identities 
${\mathbb I}q_{01}
=
{\mathbb I}{\tilde \E}'q_{01}^*
=
{\tilde \E}'{\mathbb I}q_{01}^*
=
{\tilde \E}'{\mathbb I}\tau_1\tau_2$, 
we obtain
\begin{eqnarray}\nonumber
&&
{\mathbb I}q_{01}\le {\mathbb I}(1+2\varepsilon)\frac{X_1X_2}{m^2}{\tilde \E}'(X_3)_2
=
{\mathbb I}(1+2\varepsilon){\tilde q}_0,
\\
\nonumber
&&
{\mathbb I}q_{01}
\ge
{\mathbb I}(1-3\varepsilon)\frac{X_1X_2}{m^2}{\tilde \E}'\bigl({\mathbb I}_{*3}(X_3)_2\bigr)
\ge 
{\mathbb I}(1-3\varepsilon){\tilde q}_0
-
{\mathbb I}\varphi(\varepsilon^{-1})
\frac{X_1X_2}{nm}.
\end{eqnarray}
In the last step we used the inequalities 
$\beta_n^{-1}{\tilde \E}'(X_3)_2(1-{\mathbb I}_{*3})
\le 
\beta_n^{-1}{\tilde \E}'X_3^2(1-{\mathbb I}_{*3})
\le 
\varphi(\varepsilon^{-1})$.

Now we prove (\ref{g13-7}). To this aim we write 
\begin{displaymath}
 q_{02}\le q_{03}+q_{04},
\qquad
q_{03}:={\tilde \PP}'(n_1\ge 2, n_2\ge 1){\mathbb I}_{{\cal A}_0},
\qquad
q_{04}:={\tilde \PP}'(n_1\ge 1, n_2\ge 2){\mathbb I}_{{\cal A}_0}
\end{displaymath}
and show that  for  $D_1,D_2$ 
satisfying $|D_1\cap D_2|=0$ the following inequalities hold true
\begin{equation}\label{2013-02-16}
 q_{0j}{\mathbb I}
\le
n^{-1}m^{-1}X_1X_2
\bigl(4\varepsilon {\bar z}_2+2\varphi(\varepsilon^{-1})\bigr).
\qquad
 j=3,4.
\end{equation}
We only prove (\ref{2013-02-16}) for $j=4$ (both cases $j=3,4$ are identical). 
Observing that  probabilities  $p_{k*}:={\tilde \PP}'(n_1\ge 1, n_2\ge k|X_3)$ 
satisfy the inequality $p_{2*}\le p_{1*}$, we  write
\begin{equation}\label{g13-4}
q_{04}
=
{\tilde \E}'p_{2*}
=
{\tilde \E}'p_{2*}({\mathbb I}_{*3}+{\overline {\mathbb I}}_{*3})
\le 
{\tilde \E}'p_{2*}{\mathbb I}_{*3}+{\tilde \E}'p_{1*}{\overline {\mathbb I}}_{*3}.
\end{equation}
Next, we split 
\begin{displaymath}
p_{k*}=\tau_{k*}\tau_{*},
\qquad
\tau_{*}={\tilde \PP}'(n_1\ge 1|X_3),
\qquad
\tau_{k*}={\tilde \PP}'(n_k\ge k|n_1\ge 1, X_3)
\end{displaymath}
and apply (\ref{xx}) to the probabilities $\tau_{*}$ and $\tau_{k*}$. We have
\begin{displaymath}
\tau_{*}\le \frac{X_1X_3}{m},
\quad
\tau_{1*}\le \frac{X_2(X_3-1)}{m-X_1}
\le
\frac{X_2X_3}{m}\theta_3,
\quad
\tau_{2*}
\le \frac{X_2^2X_3^2}{(m-X_1)^2}
\le 
\frac{X_2^2X_3^2}{m^2}\theta_3^2.
\end{displaymath}
We recall that $\theta_3=m/(m-X_1)$ satisfies 
$\theta_3{\mathbb I}\le (1+2\varepsilon){\mathbb I}<2{\mathbb I}$.
Collecting these inequalities in (\ref{g13-4}) 
we obtain (\ref{2013-02-16}):
\begin{eqnarray}
\nonumber
 q_{04}{\mathbb I}
&
\le
& 
4\frac{X_1X_2^2}{m^3}{\mathbb I}{\tilde \E}'X_3^3{\mathbb I}_{*3}
+
2\frac{X_1X_2}{m^2}{\mathbb I}{\tilde \E}' X_3^2{\overline {\mathbb I}}_{*3}
\\
\nonumber
&
\le
&
4\varepsilon \frac{X_1X_2}{nm}{\bar z}_2+ 2\varphi(\varepsilon^{-1})\frac{X_1X_2}{nm}.
\end{eqnarray}
In the last step  we used identity 
$m^{-1}{\tilde \E}' X_3^2{\overline {\mathbb I}}_{*3}=n^{-1}\varphi(\varepsilon^{-1})$ and
inequalities
\begin{displaymath}
 X_1X_2^2X_3^3{\mathbb I}_{*3}{\mathbb I}
\le 
\varepsilon m X_1X_2X_3^2.
\end{displaymath}


We secondly prove (\ref{g14-9}).
Denote
\begin{equation}\label{g14-10}
R_{01}
=
f_r({\tilde \lambda}_0)
-
(r!)^{-1}{\tilde \lambda}_0^r,
\qquad
R_{02}=({\tilde \PP}({\cal A}_0)-1) (r!)^{-1}{\tilde \lambda}_0^r.
 \end{equation}
We observe that  $1-e^{-{\tilde\lambda}_0}\le {\tilde \lambda}_0$ implies
  $|R_{01}|\le {\tilde\lambda}_0^{r+1}$. 
Furthermore, from the inequality
\begin{equation}\label{A0}
 1-{\tilde \PP}({\cal A}_0)={\tilde \PP}(D_1\cap D_2\not=\emptyset)\le X_1X_2m^{-1},
\end{equation}
see (\ref{xx}), we obtain $|R_{02}|\le {\tilde \lambda}_0^rX_1X_2m^{-1}$.
We remark that, for $r=0,1$, relation
 (\ref{g14-9}) 
follows from  the  bounds
$\E {\tilde\lambda}_0^{r+1}=O(n^{-r-1})$ and
$\E {\tilde \lambda}_0^rX_1X_2m^{-1}=O(n^{-r-1})$. Indeed, we have
\begin{displaymath}
 \E{\mathbb I}_{{\cal A}_0}f_r({\tilde \lambda}_0)-\E(r!)^{-1}{\tilde \lambda}_0^r
=
\E{\mathbb I}_{{\cal A}_0}R_{01}
+
\E({\mathbb I}_{{\cal A}_0}-1)(r!)^{-1}{\tilde \lambda}_0^r
=
\E{\mathbb I}_{{\cal A}_0}R_{01}
+
\E R_{02}=O(n^{-r-1}).
\end{displaymath}

In the case where $r=2$ we  invoke the truncation argument.
Denote
\begin{equation}\label{g20-1}
 R_{03}
=
f_2({\tilde\lambda}_0)(1-{\bf I}_1{\bf I}_2),
\qquad
R_{04}
=
 (2!)^{-1}{\tilde\lambda}_0^2(1-{\bf I}_1{\bf I}_2).
\end{equation}
We observe that inequalities 
\begin{equation}\label{III}
{\bf I}_1{\bf I}_2
\le 
1
\le 
{\bf I}_1{\bf I}_2+{\overline {\bf I}}_1+{\overline {\bf I}}_2
\end{equation} 
imply, for $j=3,4$,
\begin{equation}\label{2013-03-09++X}
 \E |R_{0j}|
\le
\E {\tilde\lambda}_0^2({\overline {\bf I}}_1+{\overline {\bf I}}_2)
\le
c_*n^{-2}\varphi(n^{1/4})=o(n^{-2}).
\end{equation}
Finally, we obtain (\ref{g14-9}) from the identities
\begin{eqnarray}\nonumber
 &&
\E {\mathbb I}_{{\cal A}_0}f_2({\tilde \lambda}_0)-\E (2!)^{-1}{\tilde \lambda}_0^2
=
\E {\mathbb I}_{{\cal A}_0}{\bf I}_1{\bf I}_2f_2({\tilde \lambda}_0)
-
\E {\bf I}_1{\bf I}_2(2!)^{-1}{\tilde \lambda}_0^2
+
\E{\mathbb I}_{{\cal A}_0}R_{03}-\E R_{04},
\\
\nonumber
&&
\E {\mathbb I}_{{\cal A}_0}{\bf I}_1{\bf I}_2f_2({\tilde \lambda}_0)
-
\E {\bf I}_1{\bf I}_2 (2!)^{-1}{\tilde \lambda}_0^2
=
\E {\mathbb I}_{{\cal A}_0}{\bf I}_1{\bf I}_2R_{01}
+
\E {\bf I}_1{\bf I}_2R_{02}
\end{eqnarray}
combined with  bounds (\ref{2013-03-09++X}) and 
\begin{displaymath}
\E {\bf I}_1{\bf I}_2(|R_{01}|+|R_{02}|)
\le 
\E {\bf I}_1{\bf I}_2({\tilde\lambda}_0^{3}+{\tilde\lambda}_0^{2}X_1X_2m^{-1})
\le c_*n^{-5/2}. 
\end{displaymath}


Let us prove (\ref{g25-5}). We write
\begin{displaymath}
 \E {\mathbb  I}_{{\cal A}_0}f_r({\tilde\lambda}_0)
\le 
\E f_r({\tilde\lambda}_0)
\le 
\E f_r({\tilde\lambda}_0)({\bf I}_1{\bf I}_2+{\overline {\bf I}}_1+{\overline {\bf I}}_2)
\end{displaymath}
and  
apply the inequalities $f_r(t)\le t^jf_{r-j}(t)\le t^j$, $0\le j\le r$. For $r\ge 3$ we obtain
\begin{eqnarray}\nonumber
&&
\E f_r({\tilde\lambda}_0){\bf I}_1{\bf I}_2
\le 
\E {\tilde \lambda}_0^3{\bf I}_1{\bf I}_2
\le 
c_*n^{-1/2}\E {\tilde \lambda}_0^2=O(n^{-5/2}),
\\
\nonumber
&&
\E f_r({\tilde\lambda}_0)({\overline {\bf I}}_1+{\overline {\bf I}}_2)
\le
\E {\tilde \lambda}_0^2({\overline {\bf I}}_1+{\overline {\bf I}}_2)
\le 
c_*n^{-2}\varphi(n^{1/4})=o(n^{-2}).
\end{eqnarray}













{\it Proof of (\ref{www}), (\ref{ttt}).} 
We remark that (\ref{www}), (\ref{ttt}) follows from (\ref{pir222}) and the bounds, 
for $i=1,2$,
\begin{eqnarray}\label{g18-1}
&&
{\mathbb I}_{{\cal A}_i}\Delta_{r,i}\prec O_{r},
\\
&&
\label{g18-2}
({\tilde \PP}({\cal A}_i)-\varkappa_i)f_r({\tilde \lambda}_i)\prec O_{r\vee i}.
\end{eqnarray}

We first prove (\ref{g18-1}). For this purpose we combine identities
\begin{displaymath}
\Delta_{r,i}
=
\Delta_{r,i}{\bf I}_{*1}+\Delta_{r,i}{\overline {\bf I}}_{*1}
=
\Delta_{r,i}'{\bf I}_{*1}+\Delta_{r,i}''{\bf I}_{*1}+\Delta_{r,i}{\overline {\bf I}}_{*1}
\end{displaymath}
with the  bounds, which are shown below,
\begin{equation}\label{g27-1}
{\mathbb I}_{{\cal A}_i} \Delta_{r,i}'{\bf I}_{*1}
\prec O_{r}, 
\qquad
{\mathbb I}_{{\cal A}_i} \Delta_{r,i}''{\bf I}_{*1}
\prec O_{r}, 
\qquad
{\mathbb I}_{{\cal A}_i} \Delta_{r,i}{\overline {\bf I}}_{*1}
\prec O_2.
\end{equation}
We remark that  the third bound  of (\ref{g27-1}) is an  easy consequence of Markov's inequality, 
\begin{displaymath}
\E{\mathbb I}_{{\cal A}_i}\Delta_{r,i}{\overline {\bf I}}_{*1}
\le 
\E {\overline {\bf I}}_{*1}
\le 
4(nm)^{-1}\varphi(0.5\sqrt{nm})=o(n^{-2}).
\end{displaymath}
Now we prove the first and  second bound of (\ref{g27-1}) in the case where $i=1$.
In the proof we use the simple identity $q_{11}={\tilde q}_1$  and inequality
\begin{equation}\label{g19-8}
 q_{12}{\bf I}_{*1}\le 2n^{-1}m^{-1}{\bar z}_2X_1X_2
\end{equation}
which hold whenever conditions of event ${\cal A}_1$ are satisfied.
We note that (\ref{g19-8}) follows from identities
\begin{eqnarray}\nonumber
q_{12}
&=&
{\tilde\E}'({\tilde \PP}'(w_1^*\notin D_3,\, n_1\ge 1, n_2\ge 1|X_3))
={\tilde \E}'({\tilde \PP}'(w_1^*\notin D_3|X_3)\tau_1'\tau_2'),
\\
\nonumber
\tau_1':
&=&
{\tilde \PP}'(n_1\ge 1|w_1^*\notin D_3, X_3),
\qquad
\tau_2':
=
{\tilde \PP}'(n_2\ge 1|n_1\ge 1, w_1^*\notin D_3, X_3)
\end{eqnarray}
and inequalities, see (\ref{xx}),
\begin{displaymath}
 \tau_1'\le (m-1)^{-1}(X_1-1)X_3,
\qquad 
\tau_2'\le(m-X_1)^{-1}(X_2-1)(X_3-1).
\end{displaymath}

Let us prove the first  bound of (\ref{g27-1}). 
Combining  (\ref{pirf}) with  inequality
$q_1^2\le 2q_{11}^2+2q_{12}^2$ 
we write
$|\Delta_{r,1}'|\le nq_1^2\le 2n{\tilde q}_1^2+2nq_{12}^2$. Hence, we obtain
\begin{displaymath}
 {\tilde \E}{\mathbb I}_{{\cal A}_1} |\Delta_{r,1}'|{\bf I}_{*1}
\le 
{\tilde\PP}({\cal A}_1)2n{\tilde q}_1^2+{\tilde \E}2nq_{12}^2{\bf I}_{*1}.
\end{displaymath}
Furthermore, invoking inequality
$\E n{\tilde q}_1^2{\tilde \PP}({\cal A}_1)\le c_*n^{-2}\beta_n^{-1}$, 
which follows from (\ref{g19-1}), and  bound
 $\E nq_{12}^2{\bf I}_{*1}=O(n^{-3})$, 
which follows from (\ref{g19-8}),
we obtain the first bound of (\ref{g27-1}).

Let us prove the second  bound of (\ref{g27-1}).
In the proof we  
use the inequalities 
\begin{equation}\label{kovo14}
|\lambda_1-{\tilde \lambda}_1|\le 2{\tilde q}_1+nq_{12},
\qquad
\lambda_1+{\tilde \lambda}_1\le 2n{\tilde q}_1+nq_{12}.
\end{equation}
For $r=0,1$ we apply (\ref{meanvalue}) and obtain 
\begin{equation}\label{2013-03-11}
 \E {\mathbb I}_{{\cal A}_1}|\Delta_{r,1}''|{\bf I}_{*1}
\le 
\E {\mathbb I}_{{\cal A}_1}|\lambda_1-{\tilde \lambda}_1| {\bf I}_{*1}
\le
2{\tilde q}_1\E {\mathbb I}_{{\cal A}_1}+n\E {\mathbb I}_{{\cal A}_1}q_{12}{\bf I}_{*1}
\end{equation}
Then we invoke the bounds $\E {\mathbb I}_{{\cal A}_1}=\E{\tilde \PP}({\cal A}_1)=O(n^{-1})$, 
see (\ref{g19-1}), and 
\begin{equation}\label{2013-03-11+}
 n\E {\mathbb I}_{{\cal A}_1}q_{12}{\bf I}_{*1}
\le 2{\bar z}_2m^{-1}\E {\mathbb I}_{{\cal A}_1}X_1X_2
= 2{\bar z}_2m^{-1}\E {\tilde \PP}({\cal A}_1)X_1X_2
=O(n^{-2}),
\end{equation}
see (\ref{g19-1}), (\ref{g19-8}). Clearly, (\ref{2013-03-11}), (\ref{2013-03-11+})
imply the second  bound of (\ref{g27-1}).

 For $r\ge 2$ we derive the second  bound of (\ref{g27-1}) from   inequalities, 
see (\ref{meanvalue}), (\ref{kovo14}),
\begin{equation}\nonumber
 {\mathbb I}_{{\cal A}_1}|\Delta_{r,1}''|
\le 
{\mathbb I}_{{\cal A}_1} |\lambda_1-{\tilde \lambda}_1|
(\lambda_1+{\tilde \lambda}_1)
\le 
{\mathbb I}_{{\cal A}_1}(4n{\tilde q}_1^2+3n^2{\tilde q}_1q_{12}+n^2q_{12}^2)
\end{equation}
combined with   relations
\begin{eqnarray}\label{03-11+A}
 &&
\E{\mathbb I}_{{\cal A}_1}n{\tilde q}_1^2=\E n{\tilde q}_1^2{\tilde \PP}({\cal A}_1)
=O(n^{-2}\beta_n^{-1}),
\\
\label{03-11+B}
&&
\E {\mathbb I}_{{\cal A}_1} n^2{\tilde q}_1q_{12}{\bf I}_{*1}
=
n^2{\tilde q}_1\E {\mathbb I}_{{\cal A}_1}q_{12}{\bf I}_{*1}
=
O(n^{-2}\beta_n^{-1/2}),
\\
\label{03-11+C}
&&
\E {\mathbb I}_{{\cal A}_1} n^2q_{12}^2{\bf I}_{*1}
\le 
4{\bar z}_2^2m^{-2}\E {\mathbb I}_{{\cal A}_1}X_1^2X_2^2
\le 
c_*n^{-5/2}+c_*n^{-2}\varphi(n^{1/4}).
\end{eqnarray}
Here (\ref{03-11+A}) follows from (\ref{g19-1}). (\ref{03-11+B}) follows from 
(\ref{2013-03-11+}). The first inequality of (\ref{03-11+C}) follows from 
(\ref{g19-8}). To show the second inequality of (\ref{03-11+C}) we invoke (\ref{g19-2})
and
write
\begin{displaymath}
\E {\mathbb I}_{{\cal A}_1}X_1^2X_2^2
=
\E {\tilde \PP}({\cal A}_1)X_1^2X_2^2
\le 
\E(n^{-1/2}+{\overline {\bf I}}_1+{\overline {\bf I}}_2)X_1^2X_2^2
\le
c_*\beta_n^2(n^{-1/2}+\varphi(n^{1/4})).
\end{displaymath}


Now we establish the first two bounds of (\ref{g27-1}) for $i=2$.
In the proof we use the  relations 
\begin{eqnarray}\label{g21-7}
q_{24}{\bf I}_{*1}
&
\le
&
c_*(nm)^{-1}X_1X_2,
\\
\label{g21-6}
q_{23}
&
=
&
{\tilde \E}(X_3)_2(m)_2^{-1}\le c_*n^{-2},
\\
\label{g28-1}
q_{21}
&
=
&
q_{22}= 2^{-1}{\tilde q}_2+R_{*}, 
\end{eqnarray}
where $|R_{*}|\le c_*n^{-2}(\beta_n^{-3/2}+\beta_n^{-1})$.
Here (\ref{g21-7}) is obtained in the same way as (\ref{g19-8}) above, and  (\ref{g21-6})
follows from (\ref{xx}). Furthemore, the first identity of (\ref{g28-1}) is obvious and  
second one is obtained from the identities
\begin{displaymath}
 q_{22}
=
{\tilde \E}'{\tilde \PP}_2'(w_2^*\in D_3|w_1^*\notin D_3, X_3)
{\tilde \PP}'(w_1^*\notin D_3| X_3)
={\tilde \E}'\frac{X_3}{m-1}\left(1-\frac{X_3}{m}\right)= \frac{\E X_3}{m}+R_{*}.
\end{displaymath}
To prove the first bound of (\ref{g27-1}) for $i=2$ we write, see (\ref{pirf}),
\begin{displaymath}
 {\mathbb I}_{{\cal A}_2}|\Delta_{r,2}'|
\le
{\mathbb I}_{{\cal A}_2}nq_2^2
\le 
2n{\mathbb I}_{{\cal A}_2}(q_{21}^2+q_{22}^2+q_{23}^2+q_{24}^2)
\end{displaymath}
and invoke the bounds, which follow from (\ref{g21-7}), (\ref{g21-6}),
(\ref{g28-1}),
\begin{displaymath}
\E q_{23}^2\le c_*n^{-4},
\qquad
\E q_{24}^2{\bf I}_{*1}\le c_*n^{-4},
\qquad
\E q_{21}^2{\mathbb I}_{{\cal A}_2}
=
q_{21}^2\E {\mathbb I}_{{\cal A}_2} \le c_*n^{-4}\beta_n^{-1}.
\end{displaymath}
In the last step we used inequalities
$\E {\mathbb I}_{{\cal A}_2}=\E {\tilde \PP}({\cal A}_2)\le \E \varkappa_2\le c_*n^{-2}$, 
see (\ref{g19-1}).

The second bound of (\ref{g27-1}) for $i=2$ follows from the relations shown below
\begin{eqnarray}\label{kovo15+3}
&&|\Delta_{r,2}''|
\le
|\lambda_2-{\tilde\lambda}_2|\le n(q_{23}+q_{24}+2|R_{*}|)+2{\tilde q}_2,
\\
\label{kovo15+1}
&&
 \E {\mathbb I}_{{\cal A}_2}(q_{23}+2|R_{*}|+2{\tilde q}_2)
=
(q_{23}+2|R_{*}|+2{\tilde q}_2)\E {\mathbb I}_{{\cal A}_2}
\le 
c_*(n^{-4}+n^{-3}\beta_n^{-1/2}),
\\
\label{kovo15+2}
&&
\E {\mathbb I}_{{\cal A}_2}q_{24}{\bf I}_{*1}
\le 
c_*(nm)^{-1}\E {\mathbb I}_{{\cal A}_2}X_1X_2
=
c_*(nm)^{-1}\E {\tilde \PP}({\cal A}_2)X_1X_2
=o(n^{-3}).
\end{eqnarray}
Here the first inequality of (\ref{kovo15+3}) follows from (\ref{meanvalue}), and the second 
inequality 
follows from (\ref{g28-1}) and the identity
\begin{displaymath}
 \lambda_2-{\tilde \lambda}_2
=
(n-2)(q_{23}+q_{24}+(q_{21}+q_{22}-{\tilde q}_2))-2{\tilde q}_2.
\end{displaymath}
Furthermore, (\ref{kovo15+1}) follows from (\ref{g21-6}), (\ref{g28-1}) and inequality
$\E {\mathbb I}_{{\cal A}_2}\le c_*n^{-2}$. Finally, the first inequality
of (\ref{kovo15+2}) follows from (\ref{g21-7}), 
and in the the last step of (\ref{kovo15+2}) we use the inequality 
${\tilde \PP}({\cal A}_2)\le 
\varkappa_2^{1/2}(n^{1/2}+{\overline {\bf I}}_1+{\overline{\bf I}}_2)$, which follows from 
(\ref{g19-2}).

Now we prove (\ref{g18-2}). Since $f_r({\tilde \lambda}_i)\le 1$ it suffices to show
that $\varkappa_i-{\tilde P}({\cal A}_i)\prec O_{r\vee i}$.
For $i=1$ we write, see (\ref{g19-1}),
\begin{equation}\label{kovo15+5}
 \varkappa_1\ge {\tilde \PP}({\cal A}_1)
\ge 
{\tilde \PP}({\cal A}_1){\bf I}_{*1}
\ge 
\varkappa_1\left(1-\frac{X_1X_2}{m-X_1}\right){\bf I}_{*1}
=
\varkappa_1-R_{11}-R_{12},
\end{equation}
where
\begin{equation}
 R_{11}=\varkappa_1 {\bar {\bf I}}_{*1} 
\le 
2\varkappa_1\frac{X_1}{m},
\qquad
R_{12}=\varkappa_1\frac{X_1X_2}{m-X_1}{\bar{\bf I}}_{*1}
\le 
2\varkappa_1\frac{X_1X_2}{m}.
\end{equation}
Hence, we have $\E|\varkappa_1-{\tilde \PP}({\cal A}_1)|\le \E|R_{11}|+\E|R_{12}|=O(n^{-2})$.

For $i=2$ we proceed as follows. 
Given $0<\varepsilon<1$ we write, see (\ref{g19-1}),
\begin{equation}\label{g27-3}
 \varkappa_2\ge {\tilde \PP}({\cal A}_2)
\ge 
{\tilde \PP}({\cal A}_2){\mathbb I}_{*1}{\mathbb I}_{*2}
\ge 
\varkappa_2\left(1-\frac{X_1X_2}{m-X_1}\right){\mathbb I}_{*1}{\mathbb I}_{*2}
=
\varkappa_2-R_{12}-R_{22},
\end{equation}
where
\begin{equation}
 R_{12}=\varkappa_2(1-{\mathbb I}_{*1}{\mathbb I}_{*2})
\le 
\varkappa_2({\bar {\mathbb I}}_{*1}+{\bar {\mathbb I}}_{*2}),
\qquad
R_{22}=\varkappa_2\frac{X_1X_2}{m-X_1}{\mathbb I}_{*1}{\mathbb I}_{*2}
\le 
\varkappa_2\frac{\beta_n\varepsilon^{-2}}{m-\beta_n\varepsilon^{-1}}.
\end{equation}
We let $\varepsilon=\varepsilon_n\to 0$ slowly enough to get 
$\frac{\beta_n\varepsilon^{-2}}{m-\beta_n\varepsilon^{-1}}=o(1)$.
Then we obtain $\E|\varkappa_2-{\tilde \PP}({\cal A}_2)|\le \E|R_{21}|+\E|R_{22}|=o(n^{-2})$

{\it Proof of (\ref{b4-1}).}  We write, for short, 
${\bar p}_k:={\tilde \PP}(|D_1\cap D_2|\ge k)$.
We have, see (\ref{xx}),
\begin{equation}\label{b4-2}
 {\bar p}_k
\le 
\varkappa_k\le Z_1^kZ_2^kn^{-k}.
\end{equation}
Taking the expected values in (\ref{b4-2}) we obtain   (\ref{b4-1})  for $k=1,2$. 
For $k=3$ we apply (\ref{g19-2}) and  write
\begin{displaymath}
{\bar p}_3={\bar p}_3^{2/3}{\bar p}_3^{1/3}
\le 
\varkappa_3^{2/3}(n^{-3/2}+{\overline{\bf I}}_1+{\overline{\bf I}}_2)^{1/3}
\le 
\varkappa_3^{2/3}(n^{-1/2}+{\overline{\bf I}}_1+{\overline{\bf I}}_2). 
\end{displaymath}
Hence, we have
$\PP(|D_1\cap D_2|\ge 3)\le \E\varkappa_3^{2/3}(n^{-1/2}+{\overline{\bf I}}_1+{\overline{\bf I}}_2)=o(n^{-2})$.
\end{proof}

\newpage
{\bf 5.2. Inhomogeneous graph.}
Before the proof of Theorem \ref{T3} we introduce some notation and show
(\ref{p-e-inh}). By $\PP^*$ and $\E^*$ we denote the conditional probability and 
expectation given $A_1,A_2, B_m$.
 Given $i,j, l\in [m]$ and $k,s,t\in [n]$, denote
\begin{eqnarray}\nonumber
&&
{\bf I}_{ki}={\mathbb I}_{\{A_kB_i\le \sqrt{mn}\}},
\qquad
{\mathbb I}_{ki}={\mathbb I}_{\{w_i\in D_k\}},
\\
\nonumber
&&
{\cal H}_i=\{w_i\in D_1\cap D_2\},
\qquad
{\cal H}_{ijk}=\{w_i\in D_1\cap D_k,\, w_j\in D_2\cap D_k\},
\\
\nonumber
&&
{\cal H}^*_i=\{w_i= D_1\cap D_2\},
\qquad
{\cal H}_{ij}=\{w_i,w_j\in D_1\cap D_2\},
\\
\nonumber
&&
{\cal L}_{i}=\{w_i\in D_1\cap D_2\cap D_3\},
\qquad
{\cal L}_{ijl}=\{w_i\in D_1\cap D_2,\, w_j\in D_1\cap D_3,\, w_l\in D_2\cap D_3\},
\\
\nonumber
&&
{\cal U}_{s}=\{v_1\not\sim v_2, v_s\sim v_1,  v_s\sim v_2\},
\qquad
{\cal U}_{st}={\cal U}_{s}\cap {\cal U}_t,
\\
\nonumber
&&
{\cal H}^*=\cup_{i\in [m]}{\cal H}^*_i,
\qquad
{\cal H}^{**}=\cup_{\{i,j\}\subset[m]}{\cal H}_{ij},
\qquad
{\cal H}^{***}=\cup_{k=3}^n\cup_{\{i,j\}\subset[m-1]}({\cal H}_{ijk}\cup{\cal H}_{jik}),
\\
\nonumber
&&
{\cal L}=\cup_{i\in[m]}{\cal L}_i,
\qquad
{\cal L}^*=\cup_{\{i,j,l\}\subset [m]}{\cal L}_{ijl},
\qquad
{\cal L}^{**}=\cup_{1\le i,j\le m, \, i\not=j}{\cal H}_{ij3}.
 \end{eqnarray}
Introduce the random variable ${\mathbb S}=\sum_{3\le k\le n}{\mathbb I}_{km}$
 and probability ${\tilde p}_r=\PP({\cal H}^*_m\cap\{{\mathbb S}=r\})$.

Let us prove (\ref{p-e-inh}).
It follows from  identity $\{v_1\sim v_2\}=\cup_{i\in [m]}{\cal H}_i$, 
by inclusion-exclusion, that
\begin{equation}\nonumber
 \sum_{i\in [m]}\PP({\cal H}_i)
-
\sum_{\{i,j\}\subset[m]}\PP({\cal H}_i\cap {\cal H}_j)
\le 
\PP(v_1\not\sim v_2)
\le 
\sum_{i\in [m]}\PP({\cal H}_i).
\end{equation}
We derive (\ref{p-e-inh}) from these inequalities using relations
\begin{equation}
 \PP({\cal H}_i)=(nm)^{-1}(a_1^2b_2+o(1)),
\qquad
\PP({\cal H}_i\cap {\cal H}_j)\le \E p_{1i}p_{2i}p_{1j}p_{2j}\le (nm)^{-2}a_2^2b_2^2.
\end{equation}
To show the first relation we apply the inequality 
${\bf I}_{1i}{\bf I}_{2i}\ge 1-{\bar{\bf I}}_{1i}-{\bar{\bf I}}_{2i}$
and write
\begin{equation}\label{sest-06-1}
 \frac{A_1A_2B_i^2}{nm}
\ge 
p_{1i}p_{2i}
\ge 
p_{1i}p_{2i}{\bf I}_{1i}{\bf I}_{2i}
=
\frac{A_1A_2B_i^2}{nm}{\bf I}_{1i}{\bf I}_{2i}
\ge
\frac{A_1A_2B_i^2}{nm}(1-{\bar{\bf I}}_{1i}-{\bar{\bf I}}_{2i}).
\end{equation}
Then we take  the expected values in (\ref{sest-06-1}), use the identity 
$\PP({\cal H}_i)=\E p_{1i}p_{2i}$ and the bound 
$\E A_1A_2B_i^2({\bar{\bf I}}_{1i}+{\bar{\bf I}}_{2i})=o(1)$.

\begin{proof}[Proof of Theorem \ref{T3}] 
In order to prove  (\ref{NH}) we write $cl(r)=p^*_r/({\overline p}^*_r+p^*_r)$, where
\begin{displaymath}
 p^*(r)=\PP(v_1\sim v_2, d_{12}=r),
\qquad
{\overline p}^*_r=\PP(v_1\not\sim v_2, d_{12}=r), 
\end{displaymath}
and invoke relations 
\begin{eqnarray}\label{p*r}
&&
p^*_r=\E f_{\Lambda}(r)a_1^2B_m^2n^{-1}+o(n^{-1}),
\qquad
r=0,1,\dots,
\\
\label{p*}
&&
{\overline p}^*_0=1-O(n^{-1}),
\qquad
{\overline p}^*_r=O(n^{-2}),
\qquad
r\ge 2.
\\
\label{p**}
&&
{\overline p}^*_1=n^{-1}a_1^2a_2b_2^2+o(n^{-1}).
\end{eqnarray}
Here we denote $\Lambda=a_1B_m\beta_n^{-1/2}$.

Let us prove (\ref{p*}).
For $r=0$ we write
\begin{displaymath}
 \PP(v_1\not\sim v_2, d_{12}=0)=\PP(v_1\not\sim v_2)-\PP(v_1\not\sim v_2,\, d_{12}\ge 1)
\end{displaymath}
and invoke the bounds 
\begin{displaymath}
 1-\PP(v_1\not\sim v_2)=O(n^{-1}),
\qquad
\PP(v_1\not\sim v_2,\, d_{12}\ge 1)=O(n^{-1}).
\end{displaymath}
The first bound follows from (\ref{p-e-inh}).  In order to show the second bound 
we note that the event $\{v_1\not\sim v_2,\, d_{12}\ge 1\}$ implies that there exist 
$i,j\in [m]$, 
$i\not=j$ and $3\le k\le n$ such that 
${\mathbb I}_{1i}{\mathbb I}_{ki}{\mathbb I}_{2j}{\mathbb I}_{kj}=1$. 
Hence, by Markov's inequality,
\begin{displaymath}
\PP(v_1\not\sim v_2,\, d_{12}\ge 1)
\le 
\E\sum_{3\le k\le n}\sum_{i,j\in [m], i\not= j} 
{\mathbb I}_{1i}{\mathbb I}_{ki}{\mathbb I}_{2j}{\mathbb I}_{kj}
=
\sum_{3\le k\le n}\sum_{i,j\in [m], i\not= j} 
\E p_{1i}p_{ki}p_{2j}p_{kj}.
\end{displaymath}
By the inequality 
$\E p_{1i}p_{ki}p_{2j}p_{kj}\le (nm)^{-2}\E A_1A_2A_k^2B_i^2B_j^2$, 
the right hand side  sum is $O(n^{-1})$.

For $r\ge 2$ we write ${\overline p}^*_r\le {\overline p}$, where
${\overline p}=\PP(v_1\not\sim v_2, d_{12}\ge 2)$, and invoke the bound 
${\overline p}=O(n^{-2})$. Let us prove this bound. Given $3\le s<t\le n$  introduce events

${\cal U}_{1,st}= \{\exists i\not=j$ such that 
$w_i\in D_1\cap D_s\cap D_t$ and
$w_j\in D_2\cap D_s\cap D_t\}$;

${\cal U}_{2,st}= \{\exists i_1\not= i_2\not=j$ such that
$w_{i_1}\in D_1\cap D_s$, $w_{i_2}\in D_1\cap D_t$  and
$w_j\in D_2\cap D_s\cap D_t\}$;
 
${\cal U}_{3,st}=\{\exists  i\not=j_1\not=j_2$ such that 
$w_{i}\in D_1\cap D_s\cap D_t$  and
$w_{j_1}\in D_2\cap D_s$, $w_{j_2}\in D_2\cap D_t\}$;
 
${\cal U}_{4,st}=\{\exists  i_1\not=i_2\not=j_1\not=j_2$  such that 
$w_{i_1}\in D_1\cap D_s$, $w_{i_2}\in D_1\cap D_t$  and
$w_{j_1}\in D_2\cap D_s$, $w_{j_2}\in D_2\cap D_t\}$

and observe that ${\cal U}_{st}=\cup_{k\in[4]}{\cal U}_{k,st}$. Next, using the identity
$\{v_1\not\sim v_2, d_{12}\ge 2\}=\cup_{s<t}{\cal U}_{st}$ we obtain
\begin{equation}\label{U-st}
{\overline p}
=
\PP(\cup_{s<t}{\cal U}_{st})
\le
\sum_{s<t}\PP({\cal U}_{st})
=
{\tbinom{n-2}{2}}\sum_{k\in [4]} \PP({\cal U}_{k,st})=O(n^{-2}).
\end{equation}
In the last step  we invoke the bounds that follow
 by Markov's inequality
\begin{eqnarray} \nonumber
&&
\PP({\cal U}_{1,st})\le 
\E
\sum_{i,j\in[m],\, i\not=j}
{\mathbb I}_{1i}{\mathbb I}_{si}{\mathbb I}_{ti}
{\mathbb I}_{2j}{\mathbb I}_{sj}{\mathbb I}_{tj}
\le \frac{a_1^2a_2^2b_3^2}{n^3m},
\\
\nonumber
&&
\PP({\cal U}_{3,st})=\PP({\cal U}_{2,st})\le 
\E
\sum_{i_1,i_2,j\in[m],\, i_1\not=i_2\not=j}
{\mathbb I}_{1i_1}{\mathbb I}_{si_1}{\mathbb I}_{1i_2}{\mathbb I}_{ti_2}
{\mathbb I}_{2j}{\mathbb I}_{sj}{\mathbb I}_{tj}
\le \frac{a_1a_2^3b_2^2b_3}{n^{7/2}m^{1/2}},
\\
\nonumber
&&
\PP({\cal U}_{4,st})\le 
\sum_{i_1,i_2,j_1,j_2\in[m],\, i_1\not= i_2\not=j_1\not=j_2}
{\mathbb I}_{1i_1}{\mathbb I}_{si_1}{\mathbb I}_{1i_2}{\mathbb I}_{ti_2}
{\mathbb I}_{2j_1}{\mathbb I}_{sj_1}{\mathbb I}_{2j_2}{\mathbb I}_{tj_2}
\le \frac{a_2^4b_2^4}{n^4}.
\end{eqnarray}
Proof of (\ref{p*}) is complete.


Let us prove (\ref{p**}).
We have, see (\ref{U-st}),
\begin{displaymath}
 {\overline p}^*_1=\PP(v_1\not\sim v_2, d_{12}\ge 1)-\PP(v_1\not\sim v_2, d_{12}\ge 2)
=
\PP(v_1\not\sim v_2, d_{12}\ge 1)-O(n^{-2}).
\end{displaymath}
Furthermore, from the identity 
$\{v_1\not\sim v_2, d_{12}\ge 1\}=\cup_{3\le s\le n}{\cal U}_s$
we obtain, by inclusion-exclusion,
\begin{displaymath}
 0
\le
\sum_{3\le s\le n}\PP({\cal U}_s)-\PP(v_1\not\sim v_2, d_{12}\ge 1)
\le 
\sum_{3\le s<t\le n}\PP({\cal U}_{st})=O(n^{-2}).
\end{displaymath}
In the last step we used (\ref{U-st}). It remains to evaluate the sum
$\sum_{3\le s\le n}\PP({\cal U}_s)=(n-2)\PP({\cal U}_3)$.
We observe that ${\cal U}_3=\cup_{(i,j)\in M}{\cal H}_{ij3}$, where
$M$ is the set of vectors $(i,j)\in[m]^2$ satisfying $i\not=j$.
We write, by inclusion-exclusion,
$S_1-S_2\le \PP({\cal U}_3)\le S_1$, where
\begin{displaymath}
S_1= \sum_{(i,j)\in M}\PP({\cal H}_{ij3}),
\qquad
S_2=\sum_{(i,j),(k,l)\in M, \,  (i,j)\not=(k,l)}\PP({\cal H}_{ij3}\cap {\cal H}_{kl3}), 
\end{displaymath}
and complete the proof of (\ref{p**}) by showing that
\begin{equation}\label{2013-04-08-4}
 S_1=n^{-2}a_1^2a_2b_2^2(1+o(1)),
\qquad
S_2=o(n^{-2}).
\end{equation}
The first relation of (\ref{2013-04-08-4}) follows from the identity
\begin{displaymath}
\PP({\cal H}_{ij3})=\E p_{1i}p_{2j}p_{3i}p_{3j}=(nm)^{-2}a_1^2a_2b_2^2(1+o(1)),
\end{displaymath}
which is 
obtained using the same truncation argument as in 
 (\ref{sest-06-1}) above. The  second bound of (\ref{2013-04-08-4}) follows from 
the inequalities
 that hold for any $\varepsilon>0$
\begin{eqnarray}\label{2013-04-08-5}
&&
\PP({\cal H}_{ij3}\cap{\cal H}_{kl3})\le c_*\varepsilon^2n^{-2}m^{-4}+o(n^{-3}m^{-3}),
\\
\label{2013-04-08-6}
&&
\PP({\cal H}_{ij3}\cap{\cal H}_{kj3})\le c_*\varepsilon n^{-2}m^{-3}+o((nm)^{-5/2}).
\end{eqnarray}
In order to show (\ref{2013-04-08-5})
we write $1={\mathbb I}+{\overline{\mathbb I}}$, where 
${\mathbb I}={\mathbb I}_{\{A_3\le \varepsilon n\}}$ and invoke the inequalities
\begin{displaymath}
 p_{3k}p_{3l}\le p_{3k}p_{3l}{\mathbb I}+ {\overline{\mathbb I}}
\le 
\frac{\varepsilon^2n^2B_kB_l}{nm}+{\overline {\mathbb I}}
\end{displaymath}
in the identity
$\PP({\cal H}_{ij3}\cap{\cal H}_{kl3})
=
\E p_{1i}p_{2j}p_{3i}p_{3j}p_{1k}p_{2l}p_{3k}p_{3l}  
$. Here we also use the bound $\E A_3^2{\overline{\mathbb I}}=o(1)$.
To show (\ref{2013-04-08-6}) we invoke the inequality 
\begin{displaymath}
p_{3k}
\le 
p_{3k}{\mathbb I}+{\overline{\mathbb I}}
\le 
\frac{\varepsilon n B_k}{\sqrt{nm}}+{\overline{\mathbb I}} 
\end{displaymath}
in the identity 
$\PP({\cal H}_{ij3}\cap{\cal H}_{kj3})=\E p_{1i}p_{2j}p_{3i}p_{3j}p_{1k}p_{3k}$.
Proof of (\ref{p**}) is complete.



Now we prove (\ref{p*r}). Firstly, from  relations 
${\cal H}^*\subset\{v_1\sim v_2\}\subset{\cal H}^*\cup{\cal H}^{**}$
we derive inequalities
\begin{equation}\label{04-22+1}
 0\le \PP(v_1\sim v_2, d_{12}=r)-\PP({\cal H}^*\cap\{d_{12}=r\})\le \PP({\cal H}^{**}).
\end{equation}
Here $\PP({\cal H}^{**})\le \tbinom{m}{2}\PP({\cal H}_{ij})=O(n^{-2})$, since
$\PP({\cal H}_{ij})\le \E p_{1i}p_{1j}p_{2i}p_{2j}\le a_2^2b_2^2(mn)^{-2}$.
Secondly, we write, by symmetry,
\begin{equation}\label{04-22++1}
 \PP({\cal H}^*\cap\{d_{12}=r\})=\sum_{j\in [m]}\PP({\cal H}^*_j\cap\{d_{12}=r\})
=m\PP({\cal H}_m^*\cap \{d_{12}=r\})
\end{equation}
and approximate 
$\PP({\cal H}_m^*\cap \{d_{12}=r\})$ by 
${\tilde p}_r$. We remark that relations
\begin{displaymath}
{\cal H}^*_m\cap\{{\mathbb S}=r\}
\,
\subset 
\,
{\cal H}^*_m\cap\{d_{12}=r\}
\,
\subset
\,
{\cal H}^*_m\cap \left(\{{\mathbb S}=r\}\cup {\cal H}^{***}\right)
\end{displaymath}
imply inequalities
\begin{equation}\label{04-22+2}
 0
\le 
\PP({\cal H}_m^*\cap \{d_{12}=r\})-{\tilde p}_r
\le
\PP({\cal H}_m^*\cap {\cal H}^{***})
\end{equation}
and observe that the probability
\begin{equation}\label{04-22+3}
 \PP({\cal H}_m^*\cap {\cal H}^{***})
\le
(n-2)(m-1)(m-2)\PP({\cal H}^*_{m}\cap{\cal H}_{ijk})
=O(m^{-1}n^{-2})
\end{equation}
 because 
\begin{displaymath}
 \PP({\cal H}^*_{m}\cap{\cal H}_{123})
=
\E
p_{1m}p_{2m}p_{ki}p_{kj}p_{1i}p_{2j}
\le
a_2^3b_2^3(nm)^{-3}.
\end{displaymath}

It follows from (\ref{04-22+1}), (\ref{04-22++1}),  (\ref{04-22+2}), (\ref{04-22+3})
that
\begin{equation}\nonumber
\PP(v_1\sim v_2, d_{12}=r)
=
m{\tilde p}_r
+
O(n^{-2}). 
\end{equation}

We complete the proof of (\ref{p*r}) by showing that 
\begin{equation}\label{04-22+6}
 {\tilde p}_r=\E f_{\Lambda}(r)A_1A_2B_m^2(mn)^{-1}+o(n^{-2}).
\end{equation}
Let us show (\ref{04-22+6}). Using LeCam's inequality, see (\ref{LeCam}), we write
\begin{equation}\label{04-22+5}
 |\PP^*({\mathbb S}=r)-f_{\Lambda_0}(r)|\le \Delta,
\qquad
\Lambda_0:=\sum_{3\le k\le n}p^*_{km},
\qquad
\Delta:=\sum_{3\le k\le n}(p^*_{km})^2.
\end{equation}
Here $p^*_{km}=\E^*{\mathbb I}_{km}=\E^*p_{km}\le a_1B_m(nm)^{-1/2}$. In particular, we have
$\Delta\le a_1^2B_m^2m^{-1}$. This inequality and (\ref{04-22+5}) imply 
\begin{eqnarray}\label{04-22+7}
{\tilde p}_r
&
=
&
\E \PP^*({\mathbb S}=r){\mathbb I}_{{\cal H}^*_m}
=
\E f_{\Lambda_0}(r){\mathbb I}_{{\cal H}^*_m}+R_1,
\\
\nonumber
|R_1|
&
\le
&
\E\Delta{\mathbb I}_{{\cal H}^*_m}
=
\E \Delta p_{1m}p_{2m}
\le 
\E \Delta p_{1m}(A_2n^{-1/2}+{\mathbb I}_{\{B_m>\sqrt{m}\}})=o(n^{-2}).
\end{eqnarray}
Here we used inequalities 
\begin{eqnarray}\nonumber
&&
 p_{2m}\le p_{2m}({\mathbb I}_{\{B_m\le \sqrt{m}\}}+{\mathbb I}_{\{B_m> \sqrt{m}\}})
\le A_2n^{-1/2}+{\mathbb I}_{\{B_m> \sqrt{m}\}},
\\
\nonumber
&&
 \E \Delta p_{1m}A_2n^{-1/2}
\le
 n^{-1}m^{-3/2} a_1^2\E A_1A_2B_m^3=O(n^{-5/2}), 
\\
\nonumber
&&
\E \Delta p_{1m}{\mathbb  I}_{\{B_m>\sqrt{m}\}} 
\le 
n^{-1/2}m^{-3/2}a_1^2\E A_1B_m^3{\mathbb  I}_{\{B_m>\sqrt{m}\}} 
=o(n^{-2}).
\end{eqnarray}
Let us now evaluate the term $\E f_{\Lambda_0}(r){\mathbb I}_{{\cal H}^*_m}$ of 
(\ref{04-22+7}). From relations
\begin{displaymath}
 {\cal H}^*_m\subset{\cal H}_m\subset {\cal H}^*_m\cup{\tilde{\cal H}},
\qquad
{\tilde{\cal H}}:=\cup_{i\in[m-1]}{\cal H}_{im}
\end{displaymath}
we obtain inequalities 
$0\le {\mathbb I}_{{\cal H}_m}-{\mathbb I}_{{\cal H}^*_m}\le {\mathbb I}_{{\tilde {\cal H}}}$
which yield the approximation
\begin{eqnarray}
\label{04-22+8}
&&
\E f_{\Lambda_0}(r){\mathbb I}_{{\cal H}^*_m}=\E f_{\Lambda_0}(r){\mathbb I}_{{\cal H}_m}+R_2,
\\
\nonumber
&&
|R_2|
\le
\sum_{i\in[m-1]}\PP({\cal H}_{im})
=
\sum_{i\in[m-1]}\E p_{1i}p_{2i}p_{1m}p_{2m}
\le 
n^{-2}m^{-1}a_2^2b_2^2.
\end{eqnarray}
Furthermore,  we have
\begin{equation}\label{04-23+1}
 \E f_{\Lambda_0}(r){\mathbb I}_{{\cal H}_m}
=
\E f_{\Lambda_0}(r)p_{1m}p_{2m}
=
\E f_{\Lambda_0}(r)A_1A_2B_m^2(mn)^{-1}+o(n^{-2}).
\end{equation}
In the last step we replaced $p_{1m}p_{2m}$ by $A_1A_2B_m^2(mn)^{-1}$  as in (\ref{sest-06-1}) above.

Now we are going to replace $f_{\Lambda_0}(r)$ by $f_{\Lambda}(r)$. 
For this purpose we combine  the mean value theorem and the inequality 
$|\frac{\partial}{\partial \lambda}f_{\lambda}(r)|\le 1$. We obtain
\begin{equation}\label{04-22+9}
 |f_{\Lambda}(r)-f_{\Lambda_0}(r)|
\le 
|\Lambda-\Lambda_0|.
\end{equation}
Furthermore, we write $\Lambda_0=(n-2)\E^*p_{3m}$ and $\Lambda=n \E ^*(A_3B_m/\sqrt{nm})$
and estimate
\begin{displaymath}
 |\Lambda-\Lambda_0|
\le 
(n-2)|\E^*(\frac{A_3B_m}{\sqrt{nm}}-p_{3m})|+2\frac{a_1B_m}{\sqrt{nm}}
\le (n-2)\E^*\frac{A_3B_m}{\sqrt{nm}}{\bar{\mathbb I}}_{3m}+2\frac{a_1B_m}{\sqrt{nm}}.
\end{displaymath}
The latter inequalities and (\ref{04-22+9}) yield
\begin{equation}\label{04-22+10}
\E f_{\Lambda_0}(r)A_1A_2B_m^2(mn)^{-1}
=
\E f_{\Lambda}(r)A_1A_2B_m^2(mn)^{-1}+o(n^{-2}),
\end{equation}
since 
$\E A_1A_2A_3B_m^3{\bar{\mathbb I}}_{3m}=o(1)$.
Finally, (\ref{04-22+7}), (\ref{04-22+8}), (\ref{04-23+1}) and (\ref{04-22+10}) 
imply (\ref{04-22+6}).
Proof of (\ref{p*r}) is complete.



Let us prove (\ref{clcoef}). To this aim we write $\alpha=\PP({\cal B})/\PP({\cal D})$, where
  ${\cal D}$ denotes the event $\{v_1\sim v_3, v_2\sim v_3\}$ and
${\cal B}={\cal D}\cap\{v_1\sim v_2\}$, 
and show that
\begin{equation}\label{BA17}
 \PP({\cal B})=\varkappa_1+o(n^{-2}),
\qquad
\PP({\cal D})=\varkappa_1+\varkappa_2+o(n^{-2}).
\end{equation}
Here $\varkappa_1:=a_1^3b_3n^{-3/2}m^{-1/2}$ and 
$\varkappa_2:=a_1^2a_2b_2^2n^{-2}$.
To show the first relation of (\ref{BA17}) we observe that  event 
${\cal L}$
implies 
${\cal B}$ and   event ${\cal B}$ implies ${\cal L}\cup{\cal L}^*$.
In particular, we have 
$0\le \PP({\cal B})-\PP({\cal L})\le \PP({\cal L}^*)$. Here
\begin{displaymath}
 \PP({\cal L}^*)
=\tbinom{m}{3} \PP({\cal L}_{123})
\le
\tbinom{m}{3}a_2^3b_2^3(nm)^{-3}
=
O(n^{-3}).
\end{displaymath}
Hence, $\PP({\cal B})=\PP({\cal L})+O(n^{-3})$.
Next we approximate $\PP({\cal L})$ using inclusion-exclusion
\begin{equation}\label{B1-17}
\sum_{s\in [m]}\PP({\cal L}_s)-\sum_{\{s,t\}\subset [m]}\PP({\cal L}_s\cap{\cal L}_t)
\le
\PP({\cal L})
\le
\sum_{s\in [m]}\PP({\cal L}_s)
\end{equation}
and obtain $\PP({\cal L})=\varkappa_1+o(n^{-2})$. Here we invoked the bound  
\begin{displaymath}
\sum_{\{s,t\}\subset [m]}\PP({\cal L}_s\cap{\cal L}_t)
=
\tbinom{m}{2}\PP({\cal L}_1\cap{\cal L}_2)
\le
\tbinom{m}{2}
a_2^3b_3^2(nm)^{-3}
=
O(n^{-4})
\end{displaymath}
and approximated, see (\ref{sest-06-1}),
\begin{displaymath}
\sum_{s\in [m]}\PP({\cal L}_s)=m\PP({\cal L}_s)
=m\E p_{1s}p_{2s}p_{3s}
=
m\bigl(a_1^3b_3(nm)^{-3/2}+o(n^{-3})\bigr)
=
\varkappa_1+o(n^{-2}). 
\end{displaymath}

Let us prove the second relation of (\ref{BA17}). We observe that
${\cal D}={\cal L}\cup{\cal L}^{**}$
and approximate 
\begin{displaymath}
\PP({\cal D})
\approx
\PP({\cal L})+\PP({\cal L}^{**})
\approx 
m\PP({\cal L}_1)+m(m-1)\PP({\cal H}_{123})=\varkappa_1+\varkappa_2+o(n^{-2}).
\end{displaymath}
 Our
rigorous proof is a bit more involved since we operate under minimal moment conditions.
Introduce event ${\cal A}^*=\{A_3<n^{1/4}\}$ and its indicator function 
${\mathbb I}_{{\cal A}^*}$. 
We derive upper and lower bounds for $\PP({\cal D})$ from the inequalities 
\begin{displaymath}
  \PP({\cal L}\cap {\cal A}^*)+\PP( {\cal L}^{**}\cap {\cal A}^*)
-
\PP({\cal L}\cap {\cal L}^{**}\cap {\cal A}^*)
\le 
\PP({\cal D}\cap{\cal A}^*)
\le 
\PP({\cal D})\le \PP({\cal L})+\PP( {\cal L}^{**}).
\end{displaymath}
By the union bound, the right hand side is bounded from above by
\begin{displaymath}
m\PP({\cal L}_1)+m(m-1)\PP({\cal H}_{123})=\varkappa_1+\varkappa_2+o(n^{-2}). 
\end{displaymath}
Next we show a matching lower bound for $\PP({\cal D})$. Proceeding as in (\ref{B1-17}) we write
\begin{displaymath}
 \PP({\cal L}\cap {\cal A}^*)=m\PP({\cal L}_1\cap {\cal A}^*)+O(n^{-4}),
\end{displaymath}
where 
\begin{displaymath}
 \PP({\cal L}_1\cap {\cal A}^*)
=
\E p_{11}p_{21}p_{31}{\mathbb I}_{{\cal A}^*}
=
\E p_{11}p_{21}p_{31}+o(n^{-3})
=
a_1^3b_3(nm)^{-3/2}+o(n^{-3}).
\end{displaymath}
Hence, we have $\PP({\cal L}\cap {\cal A}^*)=\varkappa_1+o(n^{-2})$. It remains to show that
\begin{equation}\label{lija}
 \PP({\cal L}^{**}\cap {\cal A}^*)\ge \varkappa_2+o(n^{-2}),
\qquad
\PP({\cal L}\cap {\cal L}^{**}\cap {\cal A}^*)=o(n^{-2}).
\end{equation}
Let us prove the first inequality of (\ref{lija}). We write, by inclusion-exclusion,
\begin{displaymath}
 \PP( {\cal L}^{**}\cap {\cal A}^*)
\ge 
S_3-S_4,
\quad \
S_3:=\sum_*\PP({\cal H}_{st3}\cap{\cal A}^*),
\quad \
S_4:=\sum_{**}\PP({\cal H}_{st3}\cap{\cal H}_{xy3}\cap{\cal A}^*).
\end{displaymath}
Here and below $\sum_*$ denotes the sum over all  vectors $(s,t)$ with $s\not= t$, 
$1\le s,t\le m$.
By $\sum _{**}$ we denote the sum over unordered pairs of distinct vectors
$\{(s,t), (x,y)\}$ with $s\not=t$, $x\not=y$ and $1\le s,t,x,y\le m$.
Next, we  calculate
\begin{displaymath}
 S_3
=
m(m-1)\PP({\cal H}_{st3}\cap {\cal A}^*)
=
m(m-1)\bigl(a_1^2a_2b_2^2(nm)^{-2}+o((nm)^{-2})\bigr)
=
\varkappa_2+o(n^{-2})
\end{displaymath}
and estimate
\begin{displaymath}
S_4
=
m(m-1)\bigl( (m-2)R_3+\tbinom{m-2}{2}R_4\bigr)
=o(n^{-2}).
\end{displaymath}
Here
\begin{eqnarray}
\nonumber
&&
R_3
=
\PP({\cal H}_{st3}\cap{\cal H}_{sy3}\cap{\cal A}^*)
\le (nm)^{-3}\E A_1A_2^2A_*^3(B_sB_tB_y)^2=O(n^{-2.75}m^{-3}).
\\
\nonumber
&&
 R_4=\PP({\cal H}_{st3}\cap{\cal H}_{xy3}\cap {\cal A}^*)
\le
(nm)^{-4}\E (A_1A_2)^2A_*^4(B_sB_tB_xB_y)^2=O(n^{-3.5}m^{-4}),
\end{eqnarray}
In the last step we used inequalities 
$A_3^4{\mathbb I}_{{\cal A}^*}< A_3^2n^{1/2}$ 
and 
$A_3^3{\mathbb I}_{{\cal A}^*}< A_3^2n^{1/4}$.

It remains to prove the second bound of (\ref{lija}). We apply the union bound 
\begin{displaymath}
 \PP({\cal L}\cap {\cal L}^{**}\cap {\cal A}^*)
\le
\sum_*\PP({\cal L}\cap {\cal H}_{st3}\cap {\cal A}^*)
\le
\sum_*(r_s+r_t+\sum_{u\in [m]\setminus\{s,t\}}r'_{u})
\end{displaymath}
where 
\begin{displaymath}
r_s=\PP({\cal L}_s\cap {\cal H}_{st3}\cap {\cal A}^*), 
\qquad
r_t=\PP({\cal L}_t\cap {\cal H}_{st3}\cap {\cal A}^*),
\qquad
r'_{u}=\PP({\cal L}_u\cap {\cal H}_{st3}\cap {\cal A}^*)
\end{displaymath}
 satisfy $r_s=r_t$ and estimate
\begin{eqnarray}\nonumber
&&
 r_s
\le 
(nm)^{-5/2}\E A_1A_2^2A_3^2B_s^3B_t^2
=
O((nm)^{-5/2}),
\\
\nonumber
&&
r'_{u}
\le 
(nm)^{-7/2}\E A_1^2A_2^2A_3^3B_s^2B_t^2B_u^3{\mathbb I}_{{\cal A}^*}
=O((nm)^{-7/2}n^{1/4}).
\end{eqnarray}
\end{proof}

{\it Acknowledgement}. 
Research 
was supported in part by the  Research Council of Lithuania grant MIP-067/2013.  



\end{document}